\pgfplotsset{width=7cm}
\newcommand{\ind}[1]{1\hspace{-.55ex}\mbox{l}_{\{#1\}}}  
\newcommand{\define}{:=}
\newcommand{\vect}[1]{\mathbf{#1}}
\newtheorem{theorem}{Theorem}[section]
\newtheorem{lemma}[theorem]{Lemma}
\newtheorem{proposition}[theorem]{Proposition}
\newtheorem{assumption}{Assumption}
\newenvironment{proof}[1][Proof]{\begin{trivlist}
\item[\hskip \labelsep {\bfseries #1}]}{\end{trivlist}}
\author{Shreyas Sekar\\ IIT Roorkee, India \and Patrick Maill\'e\\ Telecom Bretagne, France}
\title{\vspace{-2cm}Exploiting the `Web of Trust' to improve efficiency in collaborative networks}
\begin{document}
\date{\today}

\maketitle
\begin{abstract}
Maintaining high quality content is one of the foremost objectives of any web-based collaborative service that depends on a large number of users. In such systems, it is nearly impossible for automated scripts to judge semantics as it is to expect all editors to review the content. This catalyzes the need for trust-based mechanisms to ensure quality of an article immediately after an edit. In this paper, we build on previous work 
and develop a framework based on the `web of trust' concept to calculate satisfaction scores for all users without the need for perusing the article. We derive some bounds for systems based on our mechanism and show that the optimization problem of selecting the best users to review an article is NP-Hard. Extensive simulations validate our model and results, and show that trust-based mechanisms are essential to improve efficiency in any online collaborative editing platform.

\noindent{\bf Keywords:} Social networks, trust, collaborative work, performance
\end{abstract}
\section{Introduction}

The emergence and evolution of the world wide web has shifted focus towards services that follow a dynamic and interactive paradigm~\cite{craig2007changing, hoegg2006overview}. Perhaps, the most prominent among these are collaborative platforms that have revolutionized the physics of content generation and communication. These include, but are not limited to wikis, weblogs, versioning software and real-time document editing suites. Services like Wikia~\cite{wikia} allow just about anyone to create a new wiki on any desired theme and indeed, wikis exist on topics from Vintage Sewing~\cite{vintagesew} to Star Wars~\cite{starwars}.

Unfortunately, the open-to-edit nature of these systems has led to serious quality breaches in the past~\cite{vandalismwiki}, deterring their use in academic and professional contexts~\cite{snyder2007sa}. Ensuring that 
the content conforms to minimum quality standards is essential to enable the widespread adoption of these platforms. However taking into consideration the dynamic nature and large user base of such communities, it is not practical to expect every user to review or edit the document before labeling it as `reliable'. At the same time, different users have different satisfiabilities and expectations and it is not a trivial task to translate the satisfaction level of one user to the rest of the community. Thus, an important challenge is to quantify content quality and accuracy in terms of the satisfaction levels of different collaborators and determine when the document is ready for publication while using minimal human effort. One metric that expedites this process is trust.

Recent work has focused on exploiting the underlying social network structure in online communities. This has led to the development of trust-based mechanisms for recommendation systems~\cite{andersen2008trust, rozenfeld2009consistent, walter2008model}, peer-to-peer networks~\cite{xiong2004peertrust}, Internet transactions~\cite{josang2007survey} and other general web-based networks~\cite{golbeck2005computing}. A common approach is to use a trust propagation model similar to the one propounded in~\cite{guha2004propagation} to discern user preferences, i.e. for a given item, based on the feedback of users who have already tried it, determine whether or not to recommend it to an arbitrary user.

The problem of estimating satisfaction in networks is also similar to that of information propagation/diffusion, which has been the subject of a rich literature~\cite{gruhl2004information}. Kempe et al~\cite{kempe2003maximizing} considered the problem of selecting the most influential nodes in a graph to market an item to, in order to ensure its maximal spread and showed that the optimization version of this problem is NP-Hard. These works, however do not answer a few critical questions pertinent to collaborative document-editing: how to determine when the document is ready for publishing and if not, how to choose the next person to edit or estimate its quality. 	

In~\cite{maille2011trust}, the authors introduce the notion of trust for collaborative work. They define a measure known as \textit{satisfaction score}, which is the estimated satisfaction or rating of a user at the current stage of document development, when the user has not yet read the article. Their model assumes that every user $i$ has a trust value for every other user $j$ ($t_{ij} \in [0,1]$), apart from a unique threshold ($b_i \in [0,1]$). If the satisfaction score of a user is below his 
 threshold, then he is considered to be `unsatisfied'. An editor $j$ upon reviewing the article gives it an unbiased rating $e_j$ and the satisfaction score of an arbitrary user $i$ is calculated as follows: 
\begin{equation}
s_i = e_j \times t_{ij}
\label{sat1d}
\end{equation}
 Various mechanisms were proposed to select the successive editor. Document editing ends when all users are satisfied with the article (i.e. $s_i > b_i,  \forall i$).

In this paper, we propose a generalized mechanism to calculate satisfaction scores for users in arbitrary social networks, and use this to select the successive reviewers for an article. We use the term `web of trust' to denote social networks where users trust only those nodes with whom they have interacted previously, and this trust is quantifiable. The situation of interest here would be one where a new article has to be reviewed so that sufficient quality approvals (in terms of satisfaction among all users) can be reached. As regards the computation of satisfaction scores, the model proposed in~\cite{maille2011trust} emerges as a special case of our model when the underlying graph is a complete graph. We later show that the social network structure can be explicit (in terms of contacts and followers) or implicitly derived based on similarity of opinion and content. 

In order to analyze the efficiency gained by using our model, we resort to random graphs where each edge has a non-zero probability $p$ of existing. Random graphs, first proposed by Erd\"{o}s and R{\'e}nyi~\cite{erdos1960evolution}, have found extensive applications in the past especially in the case of epidemic propagation~\cite{bollobas2001random,durrett2010some}. We derive some properties of our model in random graphs and use them to show a bound on the number of users required to review an article, in the case when it is of acceptable quality. Although graph models based on the Small World~\cite{watts1998collective} and Power law~\cite{barabási2000scale} distributions are more realistic representations of social networks, Erd\"{o}s-Renyi graphs allow us to gain a better understanding of the parameters governing the satisfaction distribution while allowing some analytical reasoning. 
Our main objective here is to use trust to minimize the human evaluation effort required to be expended at each stage of the document development. It is pertinent to remark here that we are concerned only with the trust between pairs of users rather than global trust better known as reputation~\cite{resnick2000reputation}. This has been done giving due consideration to the subjective nature of content on the Internet as not all collaborative platforms have factual accuracy as their sole aim. For instance, interpretations of most works of fiction are highly subjective.

The remainder of the paper is organized as follows. In Sections~\ref{sec:model} and~\ref{sec:satisf_model}, we define the idea of trust for a collaborative platform and propose a model based on trust to calculate the satisfaction score of each user with respect to a single article. We also prove uniqueness and existence of satisfaction scores and present an  efficient algorithm to calculate the scores using trust matrices. In Section~\ref{sec:random_graphs} we consider the random graph model, and prove bounds on the minimum number of users required to review the article based on the expected satisfaction score. We prove that the optimization problem of selecting the best users to review the article is NP-Hard in Section~\ref{sec:select_raters}, and present an improvised greedy algorithm to choose the best potential raters. Finally, in Section~\ref{sec:simulations} we validate our results and observations via simulation and discuss future work in Section~\ref{sec:conclusion}.

\subsection{Additional Related work}
The concept of trust in computation has been inspired by every day human relationships and has found applications in numerous fields including medical information~\cite{blaze1996managing}, mobile~\cite{wilhelm1998problem} and pervasive computing~\cite{sun2008performance}, and security~\cite{blaze1999role}. However, the growth of the world wide web has resulted in an alarmingly large number of transactions between complete strangers and it has become imperative to utilise feedback and transactional history to develop trust or reputation based models, the most prominent being the ones on Epinions~\cite{massa2005controversial} and eBay~\cite{resnick2002trust}. Nowadays, personalized content is almost ubiquitous on the web in the form of advertisements, search results, movie recommendations (See Netflix \$1 million challenge~\cite{lohr2009netflix}), interesting links, etc. Many of these systems consider only binary product opinion (like/dislike) and thus the results obtained are not valid when ratings vary across a continuous scale. Recommendation systems are also not conservative, in the sense that their motive is to exactly predict a non-rater's vote preference for a given item as opposed to being cautious in the case of satisfaction scores.

Our work is most similar to that of Rozenfeld and Tennenholtz~\cite{rozenfeld2009consistent}, who propose a continuous,  consistent recommendation system. The model proposed in Section~\ref{sec:satisf_model} can be viewed as a generalization of their recommendation system. However a few of their properties are not very pertinent to the case of collaborative editing, though we show that our specific function satisfies most properties like consistency and monotonicity. 

Network propagation models have received extensive consideration in the contexts of epidemiology~\cite{ganesh2005effect}, (mis)information~\cite{acemoglu2010spread}, cascading faults~\cite{asavathiratham2001influence}, patches~\cite{vojnovic2008race}, etc. A majority of these models overlook the need for trust and concentrate more on the temporal aspects of information propagation than on information intensity and attenuation of information reliability as one moves away from the source. The generalized threshold model in~\cite{kempe2003maximizing} considers a threshold function $f_v$ for every user ($v$) which maps all subsets of $v$'s neighbor set to arbitrary values in $[0,1]$. Our model can be viewed as an alternative realization of the same with fixed thresholds and continuous satisfaction scores. The problem of selecting the most influential nodes to market a product to is of great significance to economics and data mining and was first posed formally by Domingos and Richardson~\cite{domingos2001mining} sparking off a flurry of research. Subsequent work~\cite{chang2009spreading, kempe2003maximizing, lappas2010finding} has proven that most varieties of the optimization problem of selecting the best seeders is NP-Hard. An excellent survey of information propagation models and the ``Maximizing influence" problem can be found in~\cite{wortman2008viral}.

With respect to quality of collaborative content, the recently proposed WikiTrust~\cite{adler2008assigning} uses author reputation and number of edits to measure the trustworthiness of each word in a wiki article, and detect vandalism. Automated evaluations based on global reputation and article semantics are however beyond the scope of this paper and we stick to using reviews from a small subset of the user base to calculate satisfaction scores for the rest. As far as we are aware, this is the first study applying the web of trust model for collaborative work. 

\section{Model}\label{sec:model}
\subsection{Trust}
\label{trustdefinition}
Although the notion of trust in computation has been inspired heavily by real-life relationships, its definition is application-dependent. For instance, it is pertinent to define trust in recommendation systems based on `similarity of tastes and preferences' as it makes sense in P2P applications to define it based on a user's bandwidth and upload/download ratio. For the purpose of collaborative editing, we derive inspiration from the definition proposed by Sztompka~\cite{sztompka1999trust}:
\begin{quote}
``Trust is a bet about the future contingent actions of others",
\end{quote}
which is interpreted in our application as: \textit{Trust is the amount of faith a user has in the choices and actions of another user}. This definition is consistent with Equation~(\ref{sat1d}) for calculating satisfaction scores based on the rating of a single user (the same user who had edited the article). The model presented in this paper is based on classically accepted assumptions and notations:
\begin{enumerate}
\item $t_{ij}$ represents the trust user $i$ has in the actions of user $j$. This can be viewed as a directed edge from $i$ to $j$ which represents the direction of trust ($i$ trusts $j$ and hence draws information about the article from $j$'s assessment to make an estimation about his own satisfaction).
\item $\forall i,j$, $0 \leq t_{ij} \leq 1$, wherein a trust of $0$ indicates that either $i$ has no knowledge of $j$ or does not trust $j$ at all. The absence of an edge from $i$ to $j$ indicates that $i$ does not know (and hence cannot trust) $j$, i.e. he will ignore $j$'s assessment in his satisfaction estimation. A trust score of $1$ indicates that $i$ has complete faith in $j$'s actions.
\item $t_{ij}$ is not necessarily equal to $t_{ji}$. This models the asymmetry that exists in real-life relationships.
\end{enumerate}

We relax two key assumptions made in~\cite{maille2011trust} in our model
\begin{enumerate}
\item Having a trust value between every pair of users imposes the restriction that every collaborator must have some knowledge about the reliability of every other collaborator. This is however, impossible in large online communities where the total number of user pairs is of the order $O(N^2)$, where $N$  is the number of collaborators. In our model, users have trust values only for a subset of the total user base, i.e. other users about whom they have prior knowledge.
\item Instead of the article being rated by only the previous editor, our model allows a small subset of the collaborators to read and rate the article, and if necessary edit it in which case the article has to be rated again by other interested users. 
\end{enumerate}
Thus, one can view the system as a directed graph where an edge between a pair of users indicates existing trust and thereby some sort of prior interaction. Figure~\ref{samplegraph} shows the sample graph of relationships in a network with six users. The shaded circles represent the raters of the current version of the article. 

\begin{figure}[htbp]
\centering
\begin{tikzpicture}[shorten >=1pt,->]
  \tikzstyle{vertex}=[circle,fill=blue!25,minimum size=16pt,inner sep=0pt]

   \foreach \name/\angle/\text/\xshift in {A/162/A/4cm, D/162/D/0cm, 2
                                  E/230/E/1.5cm, F/162/F/5.8cm}
    \node[vertex,xshift=\xshift] (\name) at (\angle:1cm) {$\text$};

\tikzstyle{vertex}=[circle,fill=black!100,minimum size=16pt,inner sep=0pt,text=white!100]
  \foreach \name/\angle/\text/\xshift in {C/90/C/1.5cm, B/240/B/4.5cm}
    \node[vertex,xshift=\xshift,yshift=0cm] (\name) at (\angle:1cm) {$\text$};      

 \path (A) edge []  node {} (B)
		   edge  [bend right] node {} (F)
       (C) edge  node {} (A)
			edge   node {} (D)
			(F) edge node {} (A)
			(D) edge [bend left] node {} (C)
			edge   node {} (A);

            	 \end{tikzpicture}
            	 \caption{Sample Graph representing a community with 6 users and 2 raters}
            	 \label{samplegraph}
\end{figure}
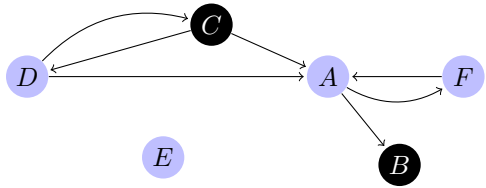

The above assumptions directly yield our problem statement:
\begin{quote}
\textbf{Problem}\\
Given a large community of users, all interested in the outcome of a single document or entity, and a set of trust values for each user on other users, use the quality evaluation of the document provided by a few users (raters) to estimate the satisfaction of the others (non-raters).
\end{quote}


In the rest of the paper, we will denote the social (trust) network by the triplet $G\define((V,E),T,B)$, where $V,E$ are the vertices and edges respectively of a directed graph with $|V|=N$, each node being a collaborator of the document and edges representing trust between collaborators\footnote{We use the terms user and collaborator interchangeably throughout this paper.}. For the sake of consistency, we shall use $(i,j) \in E$ to represent a directed edge from $i$ to $j$, meaning that $i$ trusts $j$ (to some extent). $T:E \rightarrow [0,1]$, referred to as the trust matrix, consists of the weights for each edge, i.e. trust between users. As mentioned previously, we shall use $t_{ij}$ to denote the trust user $i$ has on user $j$.  Finally, $B: V \rightarrow [0,1]$ is a set of threshold values that denote the minimum quality expectation of each user regarding the article.

We shall also use the notation $G_R=((V,E),T,B,N_R,R)$ to include the 
details of the raters in the situation description. By rater, we refer to a collaborator who has read the document and given it a quality estimation or rating on a scale of $1$.  $N_R \subset V$ represents the set of raters.
$R: N_R \rightarrow [0,1]$ represents the rating associated with each rater. Abusing notation, we shall henceforth use $r_i=R(i)$ to denote the rating of user $i$, if he is a rater and $b_i=B(i)$ to denote his quality threshold. 

Let $N(i)$ represents all nodes that user $i$ trusts, i.e. $N(i)=\{ j | (i,j)\ \in E\}$. Then we denote by $N_R(i)$, the neighbors of $i$ who are raters and $N_N(i)$, the non-raters, i.e. $N_R(i) = N(i) \cap N_R$ and $N_N(i) = N(i)\setminus N_R$. Finally, a user $i$ is said to be \emph{satisfied} if his satisfaction level is larger than his quality threshold, i.e. $s_i > b_i$.

\section{Satisfaction Estimation Model}\label{sec:satisf_model}
\label{model}
Intuitively, the satisfaction level of a non-rater should depend not only on the ratings of the various raters that he is connected to, but also on the transitivity of trust values as ratings spread across the network. The main question that arises is: if a non-rater does not directly trust a certain rater, then how can he take into account that rating in his own satisfaction calculation? This question has received widespread attention in the literature~\cite{guha2004propagation,richardson2003trust}, as propagation of (mis)trust is a key issue in most trust-based networks. 
For instance, let us consider the graph of Figure~\ref{samplegraph} induced on only the three users $[A,B,D]$.

\begin{wrapfigure}{r}{0.5\textwidth}
      \centering
\begin{tikzpicture}[shorten >=1pt,->]
  \tikzstyle{vertex}=[circle,fill=blue!25,minimum size=16pt,inner sep=0pt]

   \foreach \name/\angle/\text/\xshift in {A/162/A/2cm, D/162/D/0cm}
    \node[vertex,xshift=\xshift] (\name) at (\angle:1cm) {$\text$};

\tikzstyle{vertex}=[circle,fill=black!100,minimum size=16pt,inner sep=0pt,text=white!100]
  \foreach \name/\angle/\text/\xshift in {B/240/B/2.5cm}
    \node[vertex,xshift=\xshift,yshift=0cm] (\name) at (\angle:1cm) {$\text$};      

 \path (A) edge  node {} (B)
       (D) edge  node {} (A);
            	 \end{tikzpicture}
\label{inducedgraph}
\caption{The network of Figure~\ref{samplegraph} induced on the three users A, B, D}
\end{wrapfigure}
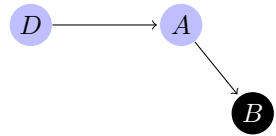
From the figure, and following~(\ref{sat1d}), A's satisfaction should be proportional to $r_B t_{AB}$. On the other hand, since D does not directly know B, he can only estimate his satisfaction based on A's satisfaction, thus depending on A's belief about B. This can be interpreted as $s_D = (t_{DA} \odot t_{AB})r_B$, where $\odot$ may be any appropriate binary operator depending on the application. In this work, we take $\odot = \times$, the binary multiplication operator in order to ensure that our system is conservative (to be defined later). Thus, $s_D = (t_{DA} t_{AB})r_B$.
Indeed, in this paper trust is used to factor a certain level of caution when basing one's decision on the recommendations of others. Overestimating one's satisfaction score because of false recommendations must in particular be avoided; hence the multiplication by the trust value in the above satisfaction calculation. 

However, this problem becomes much more complicated when there are multiple raters and multiple paths between a rater and non-rater. It is common practice to tackle this problem by estimating (indirect) trust between all pairs of users. However, in this work we avoid indirect trust estimation and instead, compute the satisfaction score of a non-rater as a weighted average of the (discounted by the trusts) scores of its trusted peers, using the following generalized model.

\begin{equation}
s_i = \frac{ \displaystyle\sum\limits_{j_1 \in N_R(i)} f(t_{ij_1})t_{ij_1}s_{j_1} + \displaystyle\sum\limits_{j_2 \in N_N(i)} g(t_{ij_2})t_{ij_2} s_{j_2}}{\displaystyle\sum_{j_1 \in N_R(i)} f(t_{ij_1}) + \displaystyle\sum\limits_{j_2 \in N_N(i)} g(t_{ij_2})}
\end{equation}
where the weight functions $f$ and $g$ are generic.

Based on various requirements (to be elucidated in the following section), we propose and analyze the following specific equation in this paper:
\begin{equation}
   s_i = \left\{ 
   \begin{array}{c l}
     \frac{ \displaystyle\alpha\sum\limits_{j_1 \in N_R(i)} t_{ij_1}^2s_{j_1} + \displaystyle(1-\alpha)\sum\limits_{j_2 \in N_N(i)} t_{ij_2}^2 s_{j_2}}{\displaystyle \alpha \sum\limits_{j_1 \in N_R(i)} t_{ij_1} + (1-\alpha) \displaystyle\sum\limits_{j_2 \in N_N(i)} t_{ij_2}} & \quad \textrm{if $i$ is a non-rater}\\ \\
     r_i & \quad \textrm{if $i$ is a rater,}\\
   \end{array} \right.
   \label{sat2d}
\end{equation}
where $\alpha$ is a parameter which determines the relative importance of raters with respect to non-raters. In this model, we take $\sfrac{1}{2} \leq \alpha \leq 1$, in order to provide greater weight to the opinions of a person who has directly read the article than someone whose opinions are based on hearsay.

To simplify the expression, we write the weights used in~(\ref{sat2d}) as follows:
\begin{equation}\label{eq:weights}
\forall i\in N,\forall j\in N(i),\qquad w_{ij}\define t_{ij}^2\frac{\alpha \ind{j\in N_R}+(1-\alpha)\ind{j \notin N_R}}{\displaystyle \alpha \sum\limits_{j_1 \in N_R(i)} t_{ij_1} + (1-\alpha) \displaystyle\sum\limits_{j_2 \in N_N(i)} t_{ij_2}},
\end{equation}
where $\ind{X}=1$ if $X$ is satisfied and $0$ otherwise. With this notation,~(\ref{sat2d}) implies in particular that
\begin{equation}\label{eq:nonrater}
i\notin N_R\quad\Rightarrow\quad s_i=\sum\limits_{j\in N(i)}w_{ij}s_j.
\end{equation}

Mathematically, the model can be viewed as follows. Suppose a node $A$ has neighbors with satisfaction scores of $\{s_1, s_2, s_3, s_4\}$ (each neighbor can be a rater or a non-rater). 
Then, in accordance with Equation~(\ref{sat1d}) the node receives values $\{t_{A1}s_1, t_{A2}s_2, t_{A3}s_3, t_{A4}s_4\}$ as possible satisfaction scores. Thus it is reasonable to take the actual satisfaction level of the node as a weighted mean of the incoming values from each of its neighbors. Here, we take the weight of the $j^{th}$ satisfaction recommendation ($t_{Aj}s_j$) to be the trust value on the corresponding neighbor ($t_{Aj}$). Weighted averaging with priority $\alpha$ for raters and $(1-\alpha)$ for non-raters yields Equation~(\ref{sat2d}) for satisfaction calculation. It is also apparent that when the underlying social network is a complete graph (i.e. $E=(V\times V)$, which occurs when all collaborators have prior knowledge about each other), there is only one rater ($|N_R| = 1$) and $\alpha=1$ (zero weightage to satisfaction scores of non-raters), Equation~(\ref{sat2d}) simplifies to Equation~(\ref{sat1d}). Thus, the satisfaction estimation model proposed here is an extended version of the model in~\cite{maille2011trust}.

The above formula yields a system of linear equations with $N-|N_R|$ unknowns, where the unknown variables are the satisfaction scores ($s_j$). A necessary condition for a non-trivial solution to exist is that $|N_R| > 0$, i.e. there must exist at least one collaborator who has read and rated the article. Drawing a parallel to network diffusion algorithms, one may attempt to calculate the spread of satisfaction scores across the network\footnote{Starting from the source(s), the ratings spread across the network, and the satisfaction of a node at a distance $k$ from the source is calculated at the $k^{th}$ iteration. This technique is analogous to breadth-first search.}. It is not very clear as to whether such an algorithm would converge as the satisfaction score of a user is dependent on the satisfaction scores of his neighbors, which may or may not be unknowns themselves at any given stage. The worst case is clear from Figure~\ref{samplegraph}, where the satisfaction score of $A$ depends on $F$, which in return depends on $A$'s score itself. Moreover, there is also some ambiguity over what the satisfaction of a non-rater should be when there is no path between him and a rater. These difficulties are tackled in the following section where a simple, yet computationally efficient iterative algorithm is presented to calculate satisfaction scores.

We are now in a position to describe the complete collaborative editing process by means of a mechanism that uses the satisfaction estimation model defined above. The mechanism is managed by a central authority which selects the most appropriate user to read/edit the document at each stage and also maintains the trust values between users. It is assumed that all users are interested only in improving the quality of the article and are without malicious intent, though we shall later present a trust updation mechanism that provides some incentive for non-vandalistic behaviour. The following pseudocode details the collaborative editing process:

\begin{algorithm}[htbp]
\caption{Collaborative Editing Scheme}
\label{alg1}
\algsetup{indent=2em}
\begin{algorithmic}[1]
\STATE $D \leftarrow$ Document 
\STATE $D(0) \leftarrow$ Initial State of Document
\STATE $n \leftarrow 0$
\STATE $s_j \leftarrow 0 \quad \forall j \quad \quad \quad$ \COMMENT{\textit{Set all users to be unsatisfied}}
\WHILE{$\exists j, s_j\leq b_j$} \label{alg1:condition}
\STATE \textbf{select} user ($i$) to read and rate the document
\IF{$D$ has been edited}
\STATE $D(n+1) \leftarrow$ Updated Document
\STATE $n \leftarrow n+1$
\STATE $s_j \leftarrow 0 \quad \forall j \quad$ \COMMENT{\textit{Reset satisfaction values to zero}}
\ELSE
\STATE $r_i \leftarrow$ Rating of Article
\STATE Update Satisfaction Scores of all non-raters using Equation~(\ref{sat2d})
\ENDIF
\ENDWHILE
\end{algorithmic}
\end{algorithm}

Simply put, as long as there exists at least one collaborator who is unsatisfied with the document, the algorithm chooses a user to rate (and potentially edit if necessary) the document. 
The only case where a deadlock might arise in the process is when a rater gives the article a rating lower than his own threshold value. However, such a case is not likely to arise in practical situations as a collaborator who is unsatisfied with the article quality will edit it himself in order to improve upon the content he is dissatisfied with. We remark here that if a constant $0 \leq \eta\leq 1$ is defined to be the fraction of satisfied users required in order to publish the document, then the algorithm can be modified (at line \ref{alg1:condition}) to continue the loop if the fraction of satisfied users is less than $\eta$ or else end the editing process and publish the document.

\subsection*{Properties satisfied by the model}

The model defined satisfies a few  properties which make it desirable for collaborative publishing. The definition of most of these properties  are based on those listed in~\cite{rozenfeld2009consistent}, but are modified to reflect the needs of collaborative document editing. 
\begin{enumerate}
\item \textbf{Stability of Satisfaction:} The satisfaction score of a user is altered iff a non-rater becomes a rater, or the article content changes. This \textit{stability} of scores holds because trust values remain constant throughout the collaborative editing process. Later we propose a trust updation mechanism that maintains this property by updating trust scores only between raters.

\item \textbf{Bounded Satisfaction:} $0 \leq s_i \leq 1 \quad \forall i$, irrespective of $G$. This is true because $0 \leq t_{ij} \leq 1$ and $0 \leq r_j \leq 1 \quad \forall i,j$ and thus any linear combination of them should lie within the same limits. 

\item \textbf{Conservativeness:} For any node, its satisfaction score lies between the maximum and minimum satisfaction of its neighbors (multiplied by the trust in them). Mathematically, $ \bar{s}_{i,min}\leq s_i \leq \bar{s}_{i,max}$, where $\bar{s}_{i,min} = \min_{\forall j}\{{t_{ij}s_j | (i,j) \in E}\}$ and $\bar{s}_{i,max} = \max_{\forall j}\{{t_{ij}s_j | (i,j) \in E}\}$. As a corollary, the satisfaction score of any user is not greater than the maximum rating given by a rater. 
Conservativeness can be interpreted as, while a node receives several recommendations for its satisfaction score, it is extremely cautious when deciding upon the final satisfaction value. In order to avoid over-estimation of the document, the satisfaction scores of its neighbors are brought down by a factor equal to its trust in them. Thus the satisfaction score of a user can be viewed as a lower bound on his quality estimation of the document.

\item \textbf{Consistency and Progressiveness:} In the event that a non-rater becomes a rater and provides the article a rating not less than his last estimated satisfaction score, then the satisfaction of all users in the network either increase or remain the same. Formally, if $s_j$ is the satisfaction of user $j$ in the system $G_R=((V,E),T,B,N_R,R)$, and $s'_j$ is the satisfaction under $G_R'=((V,E),T,B,N_R\cup i,R\cup r_i) \mbox{ for any } i, \mbox{ such that } r_i \geq s_i \mbox{ then } s'_j \geq s_j$. \textit{Progressiveness} is the property that as long as the document is not edited, the satisfaction score of a user is strictly nondecreasing, assuming that the score always remains a lower bound for the actual rating of a user. This can be interpreted as when more users review a (good) article, there is a natural tendency to become less conservative and increase one's lower bound.

\item \textbf{Independence:}
\begin{enumerate}
\item \textbf{From Disconnected users:} Removing a node to whom a particular user has no path whatsoever, and all its associated edges does not affect the user's satisfaction.

\item \textbf{Between Rater and Rater:} Removing edges between raters does not alter the satisfaction score of any user in the network, for that particular stage of document editing.

\end{enumerate}

\item \textbf{Irrelevance of Order:} The final satisfaction of a user does not depend on the order in which the raters were chosen. Thus the system depends only on the current input state $G_R=((V,E),T,B,N_R,R)$.
\end{enumerate}

\subsection{Existence and uniqueness of the satisfaction scores}
\label{sec:exist}

We prove here that the system of linear equations expressed by~(\ref{sat2d}) for non-raters admits a unique solution, under some mild assumptions that we now clarify.
The first convention we take, still in the spirit of users being cautious while considering document quality, concerns non-raters who have no \textit{trust path} (i.e. no path consisting of edges in $E$)
connecting them to any rater.
\begin{assumption}\label{as:virgin_score}
If a user does not trust (even indirectly) any rater, then his satisfaction score is $0$.
\end{assumption}

We can then prove the following result.
\begin{proposition}\label{prop:exist_uniq}
For a given trust network $G_R=((V,E),T,B,N_R,R)$, the set of equations~(\ref{sat2d}) defines a unique vector $(s_i)_{i\in V}$ satisfying Assumption~\ref{as:virgin_score}.
\end{proposition}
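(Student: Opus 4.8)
The plan is to recast the defining equations as a finite linear system and prove that its coefficient matrix is nonsingular. First I would partition $V$ into three sets: the raters $N_R$, whose scores are fixed at $s_i=r_i$; the set $V_0$ of non-raters having no trust path to any rater; and the set $V_1$ of the remaining non-raters (those connected to at least one rater). Assumption~\ref{as:virgin_score} forces $s_i=0$ for every $i\in V_0$, and I would first check this is consistent with~(\ref{eq:nonrater}): every neighbour of a $V_0$-node is again in $V_0$ (a neighbour lying in $V_1\cup N_R$ would give the node itself a trust path to a rater), so the equation $s_i=\sum_{j\in N(i)}w_{ij}s_j$ holds identically when all those scores are $0$. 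Thus the only genuine unknowns are $(s_i)_{i\in V_1}$.

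Next I would write the equations for $i\in V_1$ in the form $\mathbf{s}=M\mathbf{s}+\mathbf{c}$, i.e. $(I-M)\mathbf{s}=\mathbf{c}$, where $M=(w_{ij})_{i,j\in V_1}$ collects the weights between elements of $V_1$ and $c_i=\sum_{j\in N(i)\cap N_R}w_{ij}r_j$ gathers the (known) rater contributions; the $V_0$ terms drop out because those scores vanish. Existence and uniqueness then reduce to showing that $I-M$ is invertible, equivalently that $\rho(M)<1$. The matrix $M$ is nonnegative, and from~(\ref{eq:weights}) together with $t_{ij}^2\le t_{ij}$ on $[0,1]$ one gets $\sum_{j\in N(i)}w_{ij}\le 1$ for every $i$; restricting the sum to $j\in V_1$ only decreases it, so every row sum of $M$ is at most $1$ and $\rho(M)\le\|M\|_\infty\le 1$.

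To upgrade this to a strict inequality I would exploit the definition of $V_1$. For each $i\in V_1$ fix a shortest trust path $i=v_0\to v_1\to\cdots\to v_m$ to a first rater $v_m$; then $v_0,\dots,v_{m-1}\in V_1$ while $v_{m-1}$ has a positive-weight edge $v_{m-1}\to v_m$ leaving $V_1$, so its restricted row sum obeys $\sum_{j\in V_1}w_{v_{m-1},j}<\sum_{j\in N(v_{m-1})}w_{v_{m-1},j}\le 1$. Following this positive-weight walk (of length $m\le K:=|V_1|$) shows that a strictly positive amount $\delta_i>0$ of weight escapes $V_1$ within $K$ steps; since for fixed $i$ the quantity $p_i^{(n)}:=\sum_{j\in V_1}(M^n)_{ij}$ is nonincreasing in $n$ (each extra step is multiplied by a row sum $\le 1$), I obtain $\sum_{j\in V_1}(M^{K})_{ij}=p_i^{(K)}\le 1-\delta$ with $\delta:=\min_i\delta_i>0$. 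Hence $\|M^{K}\|_\infty\le 1-\delta<1$ and $\rho(M)\le\|M^{K}\|_\infty^{1/K}<1$, so $I-M$ is invertible and $\mathbf{s}=(I-M)^{-1}\mathbf{c}$ is the unique $V_1$-solution. Combined with the forced values on $N_R$ and $V_0$, this yields a unique vector satisfying both~(\ref{sat2d}) and Assumption~\ref{as:virgin_score}.

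The main obstacle is exactly this strict spectral bound: the row sums of $M$ can equal $1$ (whenever all relevant trust values are $1$), so plain strict diagonal dominance fails and one must use the reachability/leak argument above — essentially a weakly chained diagonal dominance, or absorbing-random-walk, argument — to rule out an eigenvalue at $1$. A secondary technical point to dispose of is the role of zero-trust edges: an edge with $t_{ij}=0$ carries weight $w_{ij}=0$ and so neither propagates satisfaction nor contributes to any leak, so ``trust path'' must be read along positive-weight edges (equivalently, zero-trust edges may be deleted, and any non-rater left without a positive-weight route to a rater is placed in $V_0$, where a degenerate zero denominator in~(\ref{eq:weights}) is likewise resolved by Assumption~\ref{as:virgin_score}).
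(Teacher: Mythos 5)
Your proposal is correct, but it takes a genuinely different route from the paper's. The paper proves existence constructively: it iterates the update map from the initialization $s_i^{(0)}=r_i$ on raters and $0$ elsewhere, and shows by induction that the sequence is componentwise nondecreasing, stays in $[0,1]^N$, and vanishes on nodes with no trust path to a rater, so it converges to a fixed point satisfying Assumption~\ref{as:virgin_score}; uniqueness is then a separate maximum-principle argument (take $i_0$ maximizing $\bar s_i-\tilde s_i$ between two putative solutions, force all its neighbours to attain the same maximal gap and to be non-raters, and recurse to conclude $i_0$ has no trust path to a rater, a contradiction). You instead isolate the true unknowns $V_1$, write the system as $(I-M)\mathbf{s}=\mathbf{c}$, and prove $\rho(M)<1$ by a weakly-chained-dominance (absorbing-walk) argument, which delivers existence and uniqueness simultaneously and in closed form $\mathbf{s}=(I-M)^{-1}\mathbf{c}$. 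Your route buys more quantitatively: $\|M^{K}\|_\infty<1$ yields geometric convergence of the fixed-point iteration, whereas the paper's monotone argument (reused to justify Algorithm~\ref{iteralgo}) gives convergence without a rate; the paper's argument is in exchange more elementary, its uniqueness step being essentially the discrete maximum principle that your spectral bound replaces. You also handle a point the paper glosses over: its uniqueness proof asserts that the weights $(w_{ij})_{j\in N(i_0)}$ are strictly positive, which fails for zero-trust edges, and your explicit convention (read trust paths along positive-weight edges, fold the rest into $V_0$) is the right fix. The one step you should write out rather than assert is the passage from ``a positive amount of weight escapes within $K$ steps'' to $p_i^{(K)}\le 1-\delta_i$; it follows from
\[
p_i^{(m)}\;\le\; p_i^{(m-1)}-(M^{m-1})_{i,v_{m-1}}\bigl(1-\sigma_{v_{m-1}}\bigr),
\]
where $\sigma_{v_{m-1}}$ is the $V_1$-restricted row sum of the last non-rater on the chosen path, and is routine given the monotonicity of $p_i^{(n)}$ you already established.
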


\begin{proof}

As before, we denote the rating of a user $i\in N_R$ by $r_i=R(i)$.

\noindent{\bf Existence.} 
Consider the series of size-$N$ vectors $(\vect{s}^{(n)})_{n\in{\mathbb N}}$ defined for all $i$ by:
\begin{eqnarray}
s_i^{(0)}&\!\!\!=\!\!\!&\left\{\begin{array}{ll}r_i&\text{ if }i\in N_R\\0&\text{ otherwise,}\end{array}\right.\label{eq:s0}\\
\forall n\geq 0,\quad s_i^{(n+1)}&\!\!\!=\!\!\!& \left\{ 
   \begin{array}{l l}
     r_i & \quad \textrm{if $i\in N_R$,}\\ \\
     \frac{ \displaystyle\alpha\!\!\!\sum\limits_{j_1 \in N_R(i)}\!\!\! t_{ij_1}^2s_{j_1}^{(n)} + \displaystyle(1-\alpha)\!\!\!\sum\limits_{j_2 \in N_N(i)}\!\!\! t_{ij_2}^2 s_{j_2}^{(n)}}{\displaystyle \alpha \!\!\!\sum\limits_{j_1 \in N_R(i)}\!\!\! t_{ij_1} + (1-\alpha)\!\!\!\sum\limits_{j_2 \in N_N(i)}\!\!\! t_{ij_2}} & \quad \textrm{otherwise}
   \end{array} \right.\label{eq:series}
\end{eqnarray}

Note that we take here the same weights as in~(\ref{eq:weights}), so that the notation above becomes
\begin{equation}\label{eq:proofnonrater}
i\notin N_R\quad\Rightarrow\quad s_i^{(n+1)}=\sum\limits_{j\in N(i)}w_{ij}s_j^{(n)}.
\end{equation}
It is easy to see that $\forall i \in N$, $0 \leq \displaystyle\sum_{j \in N(i)}w_{ij} \leq 1$, the right side becoming an equality only when a node trusts all its neighbours with a trust value of $1$.

We now 
show by induction that for all nodes $i$, $s_{i}^{(n)}$ is in the interval $[0,1]$, is nondecreasing in $n$, and equals $0$ if $i$ has no trust path to any rater. 
\begin{itemize}
\item We immediately see from~(\ref{eq:s0}) and~(\ref{eq:series}) that $\vect{s}^{(0)}$ and $\vect{s}^{(1)}$ are in $[0,1]^N$, and that $\vect{s}^{(1)}\geq\vect{s}^{0}$. We also easily remark that if $i$ has no trust path to any rater, then $s_i^{(0)}=s_i^{(1)}=0$.
\item Now take $n_0 \geq 1$, and assume that for every $1 \leq n\leq n_0$, $\vect{s}^{(n)}\in[0,1]^N$, and $\vect{s}^{(n)}\geq\vect{s}^{(n-1)}$.

Each term of $\vect{s}^{(n_0+1)}$ is a weighted sum of the terms in $\vect{s}^{(n_0)}$ with nonnegative weights whose sum is in $[0,1]$, therefore $\vect{s}^{(n_0+1)}\in[0,1]^N$. 

Moreover, consider a node $i$ with no trust path to any rater. Then $s_i^{(n_0+1)}$ is a weighted sum of the scores $(s_j^{(n_0)})_{j\in N_N(i)}$ of the nodes $j$ that $i$ trusts. But none of those nodes has a trust path to any rater (otherwise we would have a trust path from $i$ to a rater), thus $s_j^{(n_0)}=0$, which implies $s_i^{(n_0+1)}=0$.

Finally, for every non-rater node $i$, we have from~(\ref{eq:proofnonrater})
\[
s_{i}^{(n_0+1)}-s_i^{(n_0)}=
\sum_{j\in N(i)}w_{ij}(s_{j}^{(n_0)}-s_{j}^{(n_0-1)})
\geq 0,
\]
where we used the induction hypothesis and the nonnegativity of the weights $w_{ij}$.

Since for each rater $j$, $s_j^{(n_0+1)}=s_j^{(n_0)}$, we have $\vect{s}^{(n_0+1)}\geq \vect{s}^{(n_0)}$.
\end{itemize}
The coordinates of $(\vect{s}^{(n)})_{n\in{\mathbb N}}$ are therefore nondecreasing and upper bounded by $1$. Consequently, $(\vect{s}^{(n)})_{n\in{\mathbb N}}$ converges to a vector $\vect{\bar s}\in[0,1]^N$, that satisfies Assumption~\ref{as:virgin_score}. From~(\ref{eq:series}) we see that $\vect{\bar s}$ also satisfies~(\ref{sat2d}).


%
%

{\bf Uniqueness.} Assume that there exist two distinct vectors $\vect{\bar s}$ and $\vect{\tilde s}$ satisfying both the set of equations~(\ref{sat2d}) and Assumption~\ref{as:virgin_score}. Without loss of generality, we can then assume that there is a node $i\in V$ such that $\bar s_i>\tilde s_i$. 

Define $i_0\define \arg\max_i\bar s_i-\tilde s_i$. We can immediately say that node $i_0$ must have a trust path to a rater (from Assumption~\ref{as:virgin_score}), but is not a rater itself (from~(\ref{sat2d})). Then, from~(\ref{eq:nonrater}) we should also have
\[
\bar s_{i_0}-\tilde s_{i_0}=\sum_{j\in N(i_0)}w_{ij}(\bar s_j-\tilde s_j)\leq (\bar s_{i_0}-\tilde s_{i_0})\sum_{j\in N(i_0)}w_{ij}\leq \bar s_{i_0}-\tilde s_{i_0},
\]
where the first inequality comes from the definition of $i_0$ and the positivity of $(w_{ij})_{j\in N(i_0)}$, and the last one from the fact that $\sum_{j\in N(i_0)} w_{ij}\leq 1$. 
We then obtain that these inequalities are actually equalities, and since the weights $(w_{ij})_{j\in N(i_0)}$ are strictly positive this implies that $\sum_{j\in N(i_0)} w_{ij}= 1$, and that
\[
j\in N(i_0)\quad\Rightarrow\quad \bar s_j-\tilde s_j=\bar s_{i_0}-\tilde s_{i_0}.
\]
In particular, this means that no node in $N(i_0)$ is a rater (otherwise we would have a contradiction with~(\ref{sat2d}) as $\bar s_j-\tilde s_j$ would equal zero for the rater and $\sum_{j\in N_N(i_0)}w_{ij} < 1$).

But we can apply the exact same reasoning that we used for $i_0$ to all nodes in $N(i_0)$, and recursively show that $i_0$ does not trust (even indirectly) any rater, a contradiction. Consequently there can be only one solution to~(\ref{sat2d}) satisfying Assumption~\ref{as:virgin_score}.
\end{proof}

\subsection{Algorithm to calculate satisfaction}

In the previous section, we observed that the satisfaction scores of non-raters can be obtained as a solution of a system of $N-|N_R|$ linear equations. We argue why this approach is not computationally efficient when the number of users in the community is large, and present a more efficient iterative algorithm that generates solutions up to desired degrees of accuracy. 

Commonly used computer packages and toolboxes solve systems of linear equations by some variant of Gaussian Elimination~\cite[Chapter~2]{moler2004numerical}. The standard Gaussian Elimination uses a series of row operations to convert a given system $Ax=b$ to the form $Ux=y$ ($LU$ decomposition) where $U$ is an upper triangular matrix, and then solves the latter by backward substitution. It is well known that while Gaussian Elimination requires $O(N^3)$ mathematical operations, its actual time complexity is much worse even with the most efficient of implementations~\cite{fang1997worst}.

Most of today's social networks boast of large registered user bases, and this number is only expected to grow with the proliferation of web-based technologies and high speed Internet. Hence, the social graph can contain anywhere between $1000$ and million nodes, and it is not feasible to execute even a $O(N^3)$ algorithm, let alone complexities worse than cubic which is normally the case. Moreover, the memory space required is also of the order of $O(N^2)$ or greater. This implies that any mechanism which has to repeatedly calculate satisfaction scores every time a non-rater becomes a rater requires a computationally efficient algorithm that uses the structure of the graph and existing satisfaction scores to estimate new satisfaction values. Such an algorithm, which effectively exploits the sparse nature of these graphs is proposed here. 

In order to maintain consistency of notation we use $(\vect{s}^{(n)})_{n\in{\mathbb N}}$ to represent the column vector of all satisfaction scores at the $n^{th}$ iteration of the satisfaction computation algorithm and ${s}^{(n)}_{i}$ for the satisfaction score of the $i^{th}$ user after the $n^{th}$ iteration. The  algorithm to be discussed uses the following $N\times N$ matrix $\mathbf{A}$:
\[
       A_{ij}= \left\{ \begin{array}{ll}
   1 &\mbox{if $i=j$} \\
  0 &\mbox{if $i \neq j$}       \end{array}\right.
           \mbox{$i$ is a rater}
 \]
   
\[
  A_{ij} =
    \left\{ \begin{array}{cl}
   w_{ij} &\mbox{if $i \neq j$} \\
  0 &\mbox{if $i=j$}   
            \end{array}\right.  \mbox{$i$ is a non-rater}
 \]

Based on the above matrix, we define our iterative algorithm as:
\begin{algorithm}
\caption{Iterative Satisfaction Algorithm}
\label{iteralgo}
\algsetup{indent=2em}
\begin{algorithmic}[1]
\STATE $T \leftarrow$ Maximum Tolerance/Error
\STATE $M \leftarrow$ Maximum Number of Iterations
\STATE $n \leftarrow 0$
\STATE $\vect{s}^{(n)} \leftarrow \vect{s}_{prev} \mbox{ or } \mathbf{e_0} \quad \quad \quad$ \COMMENT{\textit{$\mathbf{e_0}$ represents a column vector of all zeros}}
\WHILE{$\exists j, (\vect{s}^{(n)}_{j} - \vect{s}^{(n-1)}_{j}) > T \quad  \AND \quad n < M$}
\STATE $n \gets n + 1$
\STATE $\vect{s}^{(n)} = \mathbf{A}.\vect{s}^{(n-1)}$
\ENDWHILE
\end{algorithmic}
\end{algorithm}\\
The algorithm uses matrix-vector multiplications in order to arrive upon the desired vector of satisfaction values, which (approximately) satisfy Equation~(\ref{sat2d}). Normal implementations use $O(N^2)$ operations for each matrix-vector product and thus, the overall efficiency of the algorithm is $O(MN^2)$. As long as $M < N$, Algorithm~(\ref{iteralgo}) provides a distinct advantage over directly solving the linear equations. However, in general, the underlying social graph of the community is very sparse as each user trusts only a limited number of other users, especially when the total number of users who have access to the document is very large, i.e. $\forall i, |N(i)| \ll N$. In such cases, there are very efficient representations for sparse matrices like Compressed Row Storage (CRS)~\cite{barrett1994templates} which use space only in the order of $O(n_{nz} + N)$, where $n_{nz}$ is the total number of non-zero entries in the trust matrix or in other words, the number of edges in the social graph($|E|$). CRS requires only $O(n_{nz})$ operations for matrix-vector multiplications, which is a significant increase in efficiency over $O(N^2)$ as long as the matrix is sparse.

It is easy to see that the convergence of Algorithm~\ref{iteralgo} follows from the Existence proof given in Section~(\ref{sec:exist}). The $(\vect{s}^{(n)})_{n\in{\mathbb N}}	$ vectors obtained after every iteration of the above algorithm are the same as the vectors defined in Equations~(\ref{eq:s0}) and~(\ref{eq:series}). We conclude that $\vect{s}^{(n)}$ converges to a unique vector $\vect{\bar{s}} \in [0,1]^N$ that satisfies Equation~(\ref{sat2d}).

\section{Analytical Results in Random Graphs}\label{sec:random_graphs}
\label{analytic}

The Erd\"{o}s-R{\'e}nyi model of Random Graphs is an interesting, yet simple mathematical tool that allows us to analyze social and technological networks. In this model, every possible edge in a graph has a certain probability of being present, and is independent of every other edge. We denote by $G(N,p)$, a directed random graph with $N$ nodes where $p$ is the probability that any given edge $(i,j)$ is present. By straightforward reasoning, the expected number of edges in the graph is ${pN(N-1)}$. Since each edge can be viewed as a Bernoulli random variable with probability of success $p$, the degree of a node follows the Binomial distribution. If we define a parameter $\lambda$, such that $p= \sfrac{\lambda}{N}$, then the expected degree of a node comes out to be $\approx \lambda$. Keeping the expected degree constant as $N \rightarrow \infty$, the node degree $D$ becomes a Poisson random variable with parameter $\lambda$. 

\begin{equation}
P(D=d) = e^{-\lambda} \frac{\lambda^d}{d!}
\label{poisson}
\end{equation}

The behavior of random graphs as $\lambda$ varies has been well established: when $\lambda < 1$, the connected components are small and no larger than $O(\log N)$; when $\lambda > 1$, a giant connected component of size $O(N)$ emerges. For a thorough treatment of random graphs, the reader is advised to go through the text by B\'{e}la Bollob{\'a}s~\cite{bollobas2001random}.

In this section, we focus on the choice of the set of raters. Our goal is to compare the performance (in terms of proportions of satisfied users) of different rater sets, and possibly minimize the total effort needed to validate a document. In order to reach analytical results in that direction, we consider here a specific type of instance, satisfying the assumptions below.

\begin{assumption}\label{as:erdos}
The trust relationship graph $(V,E)$ is an Erd\"{o}s-R{\'e}nyi graph with parameter $\lambda$, and the number of nodes tending to infinity.
\end{assumption}

\begin{assumption}\label{as:analytical_simplif}
The trust values and the rater behavior are such that:
\begin{itemize}
\item for any edge $(i,j)\in E$, the trust value is the same and is denoted by $t_{i,j}= t$,
\item all raters evaluate the document with the same rating $r>0$.
\end{itemize}
\end{assumption}
Considering trust values $t_{i,j}$ (resp. rater evaluations $r_i$) to be the same for all pairs of connected nodes (resp., raters) is a very strong assumption. We however make that choice in order to highlight the influence of the structure of the social network, that is the topology of the graph $(V,E)$, on the resulting satisfaction values of users. In the end, we wish to exploit the topology to optimize the selection of the raters (e.g., to minimize the total evaluation effort over the network). Having heterogeneous trust values would bring some added richness to the model, which should be taken into account in the rater selection decision, an interesting extension of the scheme (greedy algorithm) we develop here. On the other hand, relaxing the assumption of all raters giving the same score seems less interesting, since in practice the score set by a rater cannot be predicted (and thus, considered as an input for the rater selection problem): due to the linearity of the satisfaction calculation model, we expect that assuming the scores to be randomly distributed with mean value $r$ would yield similar results. 

\subsection{Distribution of satisfaction scores with uniformly selected raters}

Suppose that Assumptions~\ref{as:erdos} and~\ref{as:analytical_simplif} hold.
We are interested in calculating the (cumulative) distribution $F$ of satisfaction scores among non-rating nodes, given that a fraction $k$ of nodes are raters, picked uniformly among the $N$ users. In other words, when we pick a non-rater, $F(x)$ is the probability that his satisfaction score is below or equal to $x$, for $x\in[0,1]$. By definition, $F(0) = c$, where $c$ is the fraction of users in the connected components with no raters, and $F(1)=1$.
The symmetry inherent in random graphs and the fact that every node has an equal probability of being a rater ($k = \sfrac{|N_R|}{N}$) ensures that the satisfaction scores of non-raters can be treated as independent and identically distributed random variables.

We start progressively, considering $d$ the number of nodes that a non-rater trusts, and computing $F_d(x)$, the probability that the satisfaction score of the node is smaller than $x$ \emph{given that the node has $d$ trustees}:
\begin{itemize}
\item If $d=0$, then the satisfaction score of the node is $0$, and $F_0(x)=1, \forall x$.
\item If $d=1$, then the satisfaction score is the one of the neighbor, multiplied by the trust value $t$. That neighbor is:
\begin{itemize}
\item a rater with probability $k$ (thus, with satisfaction score $r$ under Assumption~\ref{as:analytical_simplif}). Then the probability of the node satisfaction being below $x$ is $\ind{tr\leq x}$, where $\ind{A}$ equals $1$ if condition $A$ is satisfied and $0$ otherwise.
\item a non-rater with probability $1-k$ (thus, with a satisfaction score distributed according to $F$). Then the satisfaction score of our node is below $x$ if and only if the satisfaction score of the trustee is below $x/t$, which occurs with probability $F(\min(x/t,1))$.
\end{itemize}
As a result, if $d=1$ the expected probability that the satisfaction score of the node is below $x$ is
\[
F_1(x)=k\ind{tr\leq x}+(1-k)F(\min(x/t,1)).
\]
\item If $d\geq 2$, we use conditional probabilities as in the previous case, varying the number $\ell$ of neighbors who are raters: the probability that there are $\ell$ raters among the $d$ neighbors is $C_\ell ^dk^\ell(1-k)^{d-\ell}$, with $C_\ell ^d=\frac{d!}{\ell !(d-\ell)!}$. If we denote by $f$ the (unknown) probability density function of $F$, then we have:
\newcommand{\points}{\!\cdot\!\cdot\!\cdot\!}
\begin{eqnarray}
F_d(x)&\!\!\!=\!\!\!&\sum_{\ell=0}^dC_\ell ^d k^\ell(1-k)^{d-\ell}{\mathbb P}(\sum_{\text{non-rater neighbors }i}s_i\leq \mu x - cr)\nonumber\\
&\!\!\!=\!\!\!&\sum_{\ell=0}^dC_\ell ^d k^\ell(1-k)^{d-\ell}\!\!\!\int_{s_1\geq 0
}\!\!\!\!\!\points\int_{s_{d-\ell}\geq 0
}\!\!\!\!\!\!\!f(s_1)
\points f(s_{d-\ell})\ind{\sum_{i=1}^{d-\ell}s_i\leq \mu x - cr} ds_1
\points ds_{d-\ell},\label{fixedpt}
\end{eqnarray}
where $\mu = \frac{1}{t} \{c + (d-\ell)\}$ and $c = \frac{\alpha \ell}{(1-\alpha)}$. 
\end{itemize}

Finally, the fixed point equation that the distribution $F$ (or the density $f$) of satisfaction scores among non-raters should satisfy is the following:\vspace{3ex}

\noindent $\forall x\in[0,1]$,
\begin{equation}
F(x)=e^{-\lambda}\sum_{d=0}^{\infty}\frac{\lambda^d}{d!}F_d(x),
\label{eq:fixed_point}
\end{equation}
where $F_d(x)$ is given in~(\ref{fixedpt}).

\subsection{Bound on the number of raters needed to validate a document}
In order to prove that the model defined in Section~\ref{model} results in improved efficiency, we need to show bounds on the number of raters required to satisfy a sufficient proportion of the collaborators. In other words, we argue that the Satisfaction Estimation Model leads to reduced human effort with respect to a situation where each user should evaluate the document separately to validate it. We show here an upper bound on the maximum number of unsatisfied users in the system, depending on the proportion $k$ of raters. The result is derived by using the reverse Markov inequality on the expected satisfaction value of users. In Section~\ref{sec:simulations}, we show the actual number of raters required to pass the document, obtained via extensive simulations in various network conditions.

We first claim that for any instance of $G_R=((V,E),T,B,N_R,R)$, $\forall i \in V$ and any value $\tilde\alpha > 0.5$, we have: 
 \begin{equation}
 s_{i,\tilde\alpha} \geq  s_{i,(\alpha =0.5)},
\label{alphacomparison}
 \end{equation}
where $s_{i,\alpha}$ represents the satisfaction score of user $i$ that is obtained when users apply a weight $\alpha$ to the scores of raters in their satisfaction computation~(\ref{sat2d}).
 
This means that the satisfaction score of a non-rater is always greater when raters are given more weightage than when equal weight is given to the satisfaction of raters and non-raters. This can be proved inductively by using an approach similar to the proof of existence given in Section~\ref{sec:exist} and using as the induction hypothesis:
\[
\frac{\tilde\alpha \ell r + (1-\tilde\alpha)\sum_{j=1}^{d-\ell}s_{j,\tilde\alpha}^{(n)}}{\tilde\alpha \ell + (1-\tilde\alpha)(d-\ell)} \geq \frac{\alpha \ell r + (1-\alpha)\sum_{j=1}^{d-\ell}s_{j,\alpha}^{(n)}}{\alpha  \ell + (1-\alpha)(d-\ell)}
\]
in the case when $\alpha = 0.5$, and $d,\ell,r$ are the number of users that the non-rater trusts, the number of raters among those, and the (common) score set by raters respectively, as defined previously.  We now derive the expected satisfaction score of a non-rater as a function of the proportion $k$ of raters, when $\alpha=0.5$. The expected satisfaction score when $\alpha > 0.5$ is always greater than this value and hence the bounds obtained hold for all cases. 

\begin{proposition}\label{prop:exp_satisf}
Consider Assumptions~\ref{as:virgin_score},~\ref{as:erdos} and~\ref{as:analytical_simplif} hold, and that a proportion $k$ of raters is randomly selected among the set of users, according to a uniform law. Then if users weigh non-raters' opinions as much as raters' scores (i.e. $\alpha=0.5$), the expected satisfaction score among non-raters is
\begin{equation}
{\mathbb E}(s_i)=  \frac{trk(1-e^{-\lambda})}{\left( 1-t(1-k)(1-e^{-\lambda}) \right)}.
\label{expectedsat}
\end{equation}
\end{proposition}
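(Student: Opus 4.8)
The plan is to exploit the drastic simplification that occurs at $\alpha=1/2$ and then reduce the whole problem to a single self-consistency (mean-field) equation for the common expected score. First I would rewrite~(\ref{sat2d}) under Assumption~\ref{as:analytical_simplif}: since all trust values equal $t$ and $\alpha=1-\alpha=1/2$, the weights $\alpha$ and $(1-\alpha)$ cancel and every edge out of a node $i$ carries the same weight $t/d$, where $d=|N(i)|$ is the number of trustees. Consequently, for a non-rater with $d\geq 1$ trustees,
\[
s_i=\frac{t}{d}\sum_{j\in N(i)}s_j,
\]
while by Assumption~\ref{as:virgin_score} we have $s_i=0$ when $d=0$ (no trustee, hence no trust path to a rater).

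Next I would let $S\define{\mathbb E}(s_i)$ denote the common expected satisfaction of a non-rater. The symmetry of the Erd\"os-R\'enyi model together with the uniform rater selection (argued just before the statement) guarantees that the satisfaction scores of non-raters are i.i.d., so this mean is well defined and identical across non-raters. I would then condition on the degree $d$ of the non-rater. Each trustee is, independently, a rater with probability $k$ (score exactly $r$) or a non-rater with probability $1-k$ (expected score $S$), so every trustee has expected satisfaction $kr+(1-k)S$. By linearity of expectation, for every $d\geq 1$,
\[
{\mathbb E}(s_i\mid d)=\frac{t}{d}\cdot d\cdot\bigl(kr+(1-k)S\bigr)=t\bigl(kr+(1-k)S\bigr),
\]
which is independent of $d$. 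Averaging over the Poisson degree law~(\ref{poisson}), the only special degree is $d=0$, which contributes $0$ with probability $e^{-\lambda}$, while all $d\geq 1$ share the constant conditional mean above; hence
\[
S=\bigl(1-e^{-\lambda}\bigr)\,t\,\bigl(kr+(1-k)S\bigr).
\]
This is a single linear equation in $S$, and collecting the $S$-terms yields exactly~(\ref{expectedsat}).

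The main obstacle is not the algebra but the justification that a trustee's expected score equals $kr+(1-k)S$ with $S$ the \emph{same} unconditional mean --- i.e.\ the mean-field decoupling. This requires that, as the number of nodes tends to infinity (Assumption~\ref{as:erdos}), the neighborhoods are locally tree-like, so that the scores of distinct trustees are asymptotically independent and uncorrelated with the identity (and rater status) of $i$. This is precisely the i.i.d.\ property asserted in the paper from the symmetry of $G(N,p)$; granting it, the conditioning argument and the solution of the linear equation are routine.
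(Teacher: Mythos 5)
Your proposal is correct and follows essentially the same route as the paper: both reduce~(\ref{sat2d}) at $\alpha=1/2$ to $s_i=\frac{t}{d}\sum_{j\in N(i)}s_j$, condition on the degree to find that ${\mathbb E}(s_i\mid d)=t\bigl(kr+(1-k)s\bigr)$ is independent of $d\geq 1$, average over the Poisson law, and solve the resulting linear fixed-point equation. The only cosmetic difference is that the paper conditions explicitly on the number $\ell$ of rater-trustees and evaluates the binomial sum via ${\mathbb E}[L\mid D=d]=dk$, whereas you absorb that step into linearity of expectation per trustee; your remark on the mean-field decoupling is the same i.i.d.\ symmetry assertion the paper invokes.
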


\begin{proof}
Denoting by $D$ and $L$ the random variables giving the number of nodes trusted by $i$ (degree of node $i$) and the number of raters among those trustees, respectively, we have
\begin{eqnarray*}
{\mathbb E}(s_i)  &=& \sum_{d=0}^{N-1} {\mathbb E}(s_i | D=d) {\mathbb P}(D = d)\\
&=& \sum_{d=1}^{N-1} {\mathbb E}(s_i | D=d) {\mathbb P}(D = d)\\
&=& \sum_{d=1}^{N-1}\left\lbrace \sum_{l=0}^{d} {\mathbb E}(s_i | L=\ell, D=d) {\mathbb P}(L=\ell |d)\right\rbrace {\mathbb P}(D = d)\\
\end{eqnarray*}
where ${\mathbb P}(A|B)$ refers to the probability of $A$ conditional on $B$
, and the second line comes from Assumption~\ref{as:virgin_score}.

Now, for $d\geq 1$ we have:
\[
{\mathbb E}(s_i | L=\ell, D=d) = \frac{t}{d} {\mathbb E}\left[ \ell r +  \displaystyle\sum\limits_{j \in N_N(i)} s_j \right]
\]
By linearity of expectation, and using the notation $s\define {\mathbb E}(s_i)$, we have
\begin{eqnarray}
{\mathbb E}(s_i | \ell, d)& =& \frac{t}{d} (\ell r +  (d-\ell)s ) \nonumber \\
 {\mathbb E}(s_i | d)& =& \frac{t}{d} \sum_{\ell =0}^d C_{\ell}^d k^\ell (1-k)^{d-\ell} (\ell r +  (d- \ell)s )  \nonumber \\
 &=& \frac{t}{d} \left(rdk + sd (1-k)\right) \nonumber \quad \mbox{(using that $\sum_{\ell =0}^d \ell C_{\ell}^d k^\ell (1-k)^{d-\ell}={\mathbb E}[L|D=d] = dk$)} \\
 & = & t\left(rk + s(1-k)\right) \nonumber 
\end{eqnarray}
Interestingly, this is independent of $d$ (when $d\geq 1$). Two cases therefore arise,\\
\textbf{Case I:} $d=0$
$${\mathbb E}(s_i) = 0 \quad \mbox{(by Assumption~\ref{as:virgin_score})}$$
$${\mathbb P}(D=0) = e^{-\lambda}$$
\textbf{Case II:} $d\geq 1$
$${\mathbb E}(s_i) = t(rk + s(1-k))$$
$${\mathbb P}(D=d) = e^{-\lambda}\frac{\lambda^d}{d!}$$
Hence, we have
\begin{eqnarray}
 {\mathbb E}(s_i)& =& t \sum_{d =1}^{\infty} e^{-\lambda}\frac{\lambda^d}{d!} (rk + s(1-k) )  \nonumber \\
s &=& t (1- e^{-\lambda}) (rk + s(1-k) ) \nonumber \\
\implies s\left\lbrace 1-t(1-k)(1-e^{-\lambda}) \right\rbrace & = & trk(1-e^{-\lambda}), \nonumber
\end{eqnarray}
which establishes the proposition.
\end{proof}

We can now use this result to provide bounds on the number of raters that are needed so that a target proportion of the community is satisfied with the document.
\begin{proposition}\label{prop:bound}
Consider Assumptions~\ref{as:virgin_score},~\ref{as:erdos} and~\ref{as:analytical_simplif} hold, and assume that the minimum quality requirements of users are 
all equal to some value $b\in(0,1)$ such that $r > b$.
Then if raters are picked uniformly among the set of users, for a proportion $T$ of the non-raters to be satisfied it is necessary that the proportion of raters be at least
%
\begin{equation}\label{eq:bound2}
k^{\min}(T)=\frac{b\tilde T(1-t(1-e^{-\lambda}))}{t(1-e^{-\lambda})(r-b\tilde T)},
\end{equation}
while it is sufficient to choose a proportion of raters
%
%
\begin{equation}\label{eq:bound1}
k^{\max}(T)=\min\left(\frac{\left(1-t(1-e^{-\lambda})\right)(T(1-b) + b)}{t(1-e^{-\lambda})\left(r-b - T(1-b)\right)},\quad 1 \right).
\end{equation}
Consequently, to satisfy a proportion $\tilde T$ of the whole community:
\begin{itemize}
\item it is necessary that the proportion of raters be at least $k^{\min}(T^{\min})$, with $T^{\min}$ the solution of the equation $(1-k^{min}(T))(1-T)=1-\tilde T$,
\item it is sufficient that the proportion of raters be at least $k^{\max}(T^{\max})$, with $T^{\max}$ the solution of the equation $(1-k^{max}(T))(1-T)=1-\tilde T$.
\end{itemize}

\end{proposition}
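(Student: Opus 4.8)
The plan is to reduce the statement about the \emph{proportion} of satisfied users to a statement about the single‑user probability ${\mathbb P}(s_i>b)$, and then to sandwich that probability by Markov's inequality (for the necessary bound) and its reverse form (for the sufficient bound), feeding in the closed‑form expected score of Proposition~\ref{prop:exp_satisf}. First I would invoke the i.i.d.\ structure of non‑rater scores recorded in Section~\ref{sec:random_graphs} to argue that, as $N\to\infty$, the realized fraction of satisfied non‑raters concentrates on ${\mathbb P}(s_i>b)$; thus ``a proportion $T$ of non‑raters is satisfied'' may be read as ${\mathbb P}(s_i>b)=T$. I would fix $\alpha=0.5$ so that~\eqref{expectedsat} applies; by~\eqref{alphacomparison} scores only increase for $\alpha>0.5$, so the sufficiency direction stays valid (fewer raters are then needed), while the necessity direction is the requirement obtained at $\alpha=0.5$. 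Writing $q\define 1-e^{-\lambda}$, expression~\eqref{expectedsat} reads ${\mathbb E}(s_i)=\frac{trkq}{1-t(1-k)q}$, which I would check is strictly increasing in $k$ (its derivative has numerator $trq(1-tq)>0$ since $tq<1$); this monotonicity is what lets me invert the expected‑value inequalities into thresholds on $k$.

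For the \textbf{necessary} bound, since $s_i\geq 0$, Markov's inequality gives $T={\mathbb P}(s_i>b)\leq {\mathbb E}(s_i)/b$, hence ${\mathbb E}(s_i)\geq bT$. Because ${\mathbb E}(s_i)$ is increasing in $k$, this lower bound on the expected score is equivalent to a lower bound $k\geq k^{\min}(T)$, and solving ${\mathbb E}(s_i)=bT$ for $k$ produces exactly~\eqref{eq:bound2}. For the \textbf{sufficient} bound I would use $s_i\in[0,1]$ together with the reverse Markov inequality: from ${\mathbb E}(s_i)\leq b\,{\mathbb P}(s_i\leq b)+{\mathbb P}(s_i>b)$ one gets ${\mathbb P}(s_i>b)\geq \frac{{\mathbb E}(s_i)-b}{1-b}$. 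Requiring the right‑hand side to be at least $T$ is sufficient for a proportion $T$ of satisfied non‑raters, i.e. ${\mathbb E}(s_i)\geq b+T(1-b)$; inverting the monotone expression~\eqref{expectedsat} and capping at $1$ (since $k\leq 1$) gives exactly~\eqref{eq:bound1}.

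For the community‑wide conclusion I would observe that every rater has score $r>b$ and is therefore satisfied, so the only unsatisfied users are non‑raters: among the $(1-k)$ fraction of non‑raters a fraction $(1-T)$ is unsatisfied, giving a whole‑community unsatisfied fraction of exactly $(1-k)(1-T)$. Imposing the target amounts to $(1-k)(1-T)=1-\tilde T$, which couples $k$, $T$ and $\tilde T$. Substituting the necessary relation $k=k^{\min}(T)$ (resp.\ the sufficient relation $k=k^{\max}(T)$) into this identity yields a single scalar equation whose solution is $T^{\min}$ (resp.\ $T^{\max}$); since $k^{\min}$ and $k^{\max}$ are increasing in $T$, the required proportion of raters is then $k^{\min}(T^{\min})$ (resp.\ $k^{\max}(T^{\max})$).

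The main obstacle I anticipate is not the two Markov estimates, which are routine, but the two ``glue'' steps. The first is making rigorous the passage from the per‑user probability ${\mathbb P}(s_i>b)$ to the realized proportion in a finite random graph: this needs the independence claim together with a law‑of‑large‑numbers/concentration argument as $N\to\infty$, and a clean treatment of the rater‑free components that contribute the atom $F(0)=c$ to the unsatisfied mass. The second is correctly resolving the coupling between $k$ and $T$ in the last step, in particular checking that the equations $(1-k^{\min}(T))(1-T)=1-\tilde T$ and its $k^{\max}$ analogue admit admissible solutions and that substituting the bound at equality genuinely minimizes the rater count over all feasible $(k,T)$ pairs meeting the community target. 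Throughout I would also track the sign conditions $r-bT>0$ and $r-b-T(1-b)>0$, which keep the denominators positive and preserve the inequality directions when cross‑multiplying.
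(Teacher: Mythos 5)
Your proposal is correct and follows essentially the same route as the paper: apply Markov's inequality to $s_i$ for the necessary bound and to $1-s_i$ for the sufficient bound, substitute the expected score~(\ref{expectedsat}), invert in $k$, and then use $r>b$ to convert the non-rater proportion $T$ into the community-wide identity $(1-k)(1-T)=1-\tilde T$. Your added care about monotonicity in $k$, the concentration of the empirical fraction, and the sign conditions goes beyond what the paper writes but does not change the argument (and your algebra confirms that the $\tilde T$ appearing inside~(\ref{eq:bound2}) should be read as $T$).
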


\begin{proof}
We can  apply Markov's inequality to obtain a lower bound on the number of unsatisfied users.
\[
{\mathbb P}(s_i\leq b) \geq \frac{b-{\mathbb E}(s_i)}{b}
\]

Substituting~(\ref{expectedsat}) in this equation, we get
\begin{equation}
{\mathbb P}(s_i \leq b)  \geq \frac{b-t(1-e^{-\lambda})(rk + b(1-k))}{b\left( 1-t(1-k)(1-e^{-\lambda}) \right)},
\label{mainbound2}
\end{equation}
which leads to~(\ref{eq:bound2}) once we write the target condition ${\mathbb P}(s_i \leq b)\leq 1-T$.

Again applying the Markov inequality, we obtain
\[
{\mathbb P}(s_i\leq b) ={\mathbb P}(1-s_i\geq 1-b) \leq \frac{1-{\mathbb E}(s_i)}{1-b}.
\]
Then substituting~(\ref{expectedsat}) in the above equation,we get
\begin{equation}
{\mathbb P}(s_i \leq b)  \leq \frac{1-t(1-e^{-\lambda})(1-k(1-r))}{(1-b)\left( 1-t(1-k)(1-e^{-\lambda}) \right)},
\label{mainbound}
\end{equation}
which yields~(\ref{eq:bound1}).

Now, since $r>b$ all raters are satisfied, so a proportion $T$ of non-raters being satisfied corresponds to a proportion $k+(1-k)T$ of satisfied users (or equivalently, a proportion $(1-k)(1-T)$ of unsatisfied users) over the whole community. This implies the second part of the proposition.
\end{proof}

\section{Selecting raters to maximize satisfaction}\label{sec:select_raters}
In this section, we consider an optimization problem that is closely connected to our model, that of finding the optimal set of raters (given a maximum number of raters) in order to maximize satisfaction. We show that this problem is NP-Hard irrespective of the chosen parameters. An approximation algorithm based on the concept of marginal cost and approximate oracles is then proposed and the improvement in efficiency obtained due to this algorithm is shown via simulations in the next section.

\subsection{Complexity of Optimization Problem}
We define the optimization problem of selecting the best possible raters as follows:

\textbf{MAXIMUM-SATISFACTION} Given:
\begin{itemize} 
\item a social network represented by $G_{\alpha}\define((V,E),T,B,r,\alpha)$, where $r$ is the rating that any rater would give to the article under consideration and all other parameters are as defined previously,
\item and an integer $\kappa$ that denotes the maximum cardinality of the rater set ($|N_R| \leq \kappa$),
\end{itemize}
find the set $N_R\subset V$ of maximum size $\kappa$ that satisfies the maximum number of users in the network according to Equation~(\ref{model}) and Assumption~\ref{as:virgin_score}.

We now prove that this rater selection problem is computationally hard.
\begin{proposition}
\textbf{MAXIMUM-SATISFACTION} is NP-Hard (even under the simplifying assumptions~\ref{as:analytical_simplif}).
\end{proposition}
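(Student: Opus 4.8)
The plan is to establish NP-hardness by a polynomial-time reduction from the classical \textbf{SET COVER} problem (equivalently \textbf{MAXIMUM COVERAGE}), which lets us recycle the ``choose at most $\kappa$ raters'' structure as the choice of at most $\kappa$ sets. Given a Set Cover instance with universe $U=\{u_1,\dots,u_n\}$ and sets $S_1,\dots,S_m\subseteq U$, I would build a \emph{bipartite trust graph} with one ``set node'' for each $S_j$ and one ``element node'' for each $u_i$, and a single directed edge $(u_i,S_j)\in E$ (meaning $u_i$ trusts $S_j$) whenever $u_i\in S_j$. Crucially, the set nodes are given \emph{no outgoing edges}, so they act as sinks in the trust graph. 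All trust values are set to the common value $t$ and all raters to the common rating $r$, so that the construction stays within the simplifying Assumption~\ref{as:analytical_simplif}; the parameters $\alpha,r,t$ are left as arbitrary fixed constants in their admissible ranges, which is exactly what yields hardness ``irrespective of the chosen parameters.'' Element-node thresholds are set to a small value $b>0$ specified below, and set-node thresholds to $1$ so that set nodes never count as satisfied.

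The heart of the argument is the claim that, \emph{when the raters are chosen among the set nodes}, an element node $u_i$ is satisfied if and only if $u_i$ is covered. A non-rater set node has no trust path to any rater (it is a sink), so by Assumption~\ref{as:virgin_score} its satisfaction score is $0$. Hence for an element node $u_i$ of degree $d$ with exactly $\ell$ rater neighbours, Equation~(\ref{sat2d}) collapses to $s_{u_i}=\frac{t\alpha\ell r}{\alpha\ell+(1-\alpha)(d-\ell)}$, since every non-rater neighbour contributes a zero score. This quantity is $0$ when $\ell=0$ and, because $\alpha\geq 1/2$, is nondecreasing in $\ell$, with minimal positive value $\beta:=\frac{t\alpha r}{\alpha+(1-\alpha)(m-1)}>0$ attained at $\ell=1$ and the largest admissible degree. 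Choosing $b:=\beta/2$ then makes $s_{u_i}>b$ precisely when $\ell\geq 1$, i.e. precisely when some selected set contains $u_i$, so the number of satisfied element nodes equals the number of covered elements.

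It remains to remove the freedom to pick raters that are element nodes. Because set nodes are sinks, making an element node $u_i$ a rater influences no other node: every path out of a second element node leads only into sinks, so $u_i$'s rating never reaches another element node. An element-node rater thus satisfies only itself, and can be replaced by any set node containing $u_i$ without decreasing the number of satisfied elements while respecting the budget $\kappa$. Consequently an optimal rater set may be assumed to consist of set nodes only, and \textbf{MAXIMUM-SATISFACTION} with budget $\kappa$ satisfies at least $n$ users iff the Set Cover instance admits a cover of size at most $\kappa$ (and more generally the maximum number of satisfiable element nodes equals the maximum coverage by $\kappa$ sets). Since this transformation is polynomial and Set Cover/Maximum Coverage is NP-hard, the result follows.

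The main obstacle is taming the \emph{global} nature of the satisfaction recursion~(\ref{sat2d}): in a general graph a node can become satisfied through long trust paths, which would blur the clean ``satisfied iff covered'' equivalence. The sink construction, together with Assumption~\ref{as:virgin_score}, is exactly what localizes influence to a single hop and neutralizes this effect; the remaining care is quantitative --- verifying monotonicity in $\ell$ and pinning the threshold $b$ strictly below the worst-case positive score $\beta$ (which depends only on the polynomially bounded maximum degree), and checking the degenerate case $\alpha=1$, where the $0/0$ in~(\ref{sat2d}) for an uncovered node is resolved directly by Assumption~\ref{as:virgin_score}.
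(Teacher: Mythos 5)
Your proof is correct and follows essentially the same route as the paper: a reduction from Maximum Coverage via a bipartite trust graph in which element nodes trust the set nodes containing them, with set nodes acting as sinks, followed by the observation that any element-node rater can be swapped for a set node without loss. The only cosmetic difference is that the paper simply sets every threshold to $0$ (so that being satisfied coincides with having a strictly positive score), whereas you derive an explicit positive threshold $b=\beta/2$ from the worst-case one-rater score; both choices make ``satisfied'' equivalent to ``covered.''
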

\begin{proof}
The problem that we use for our reduction is the NP-Hard problem Maximum k-Cover, also known as the Maximum Coverage Problem. The problem is stated as follows: 
Given universe $U=\{U_1,U_2,\cdots ,U_m\}$ and a collection of subsets $S=\{S_1,S_2,\cdots ,S_n\}$ such that $S_i \subset U, \forall i$, find a collection of subsets $S^* \subset S$ which maximizes $|\cup _i S^*_i|$ and satisfies $|S^*| \leq \kappa$. In other words, we have to cover as many elements as possible in the universe. We assume that $(\cup _i S_i) = U$. We now reduce this to an instance of \textbf{MAXIMUM-SATISFACTION}.

Consider a bipartite graph $G_{\alpha}=((V,E),B,T,r,\alpha)$ such that $|V|=m+n$. The $m+n$ nodes are $(U_1,U_2,\cdots ,U_m)$ and $(S_1,S_2,\cdots ,S_n)$. There is a directed edge $U_j \rightarrow S_i$ iff $U_j \in S_i$. Let $t_{ij} = t$ for all edges. All of $t$, $\alpha$ and $r$ lie in the range $(0,1]$. Let $b_i =  0$ $\forall i \in V$. The purpose of setting all the thresholds to zero is to ensure that a user would be satisfied as soon as his satisfaction score becomes non-zero, as $s_i > b_i$ is the condition for satisfaction.

\textbf{Claim} Any algorithm for Maximum Satisfaction of $(G_\alpha,\kappa)$ also yields a solution for Maximum k-cover.

\begin{lemma} Any algorithm that selects raters to maximize satisfaction only selects nodes from $S$.
\end{lemma}
Suppose an optimal solution contains the node $U_j \in U$, we can replace $U_j$ by $S_i \in S$ such that $(U_j,S_i)\in E$ and the solution still remains optimal. Such an $S_i$ exists because $(\cup_k S_k) = U$. In the case when $S_i$ already belongs to the set $S^*$, then the solution is improved by one which contradicts our claim that the previous solution is optimal. Therefore, there exists an optimal solution with nodes only from $S$, which the algorithm finds.

$\forall U_j$, $s_{U_j} > 0$ iff the optimal solution contains a set $S_i$, which has an incoming edge from $U_j$, i.e. $U_j$ trusts $S_i$. Thus any algorithm for Maximum Satisfaction, selects $\kappa$ nodes from $S$ such that the number of satisfied nodes in $U$ is maximized. This completes our proof that the \textbf{MAXIMUM-SATISFACTION} problem is NP-Hard.
\end{proof}

\subsection{Greedy Algorithm to Select Raters}
\label{maxsatisfaction}
In the preceding sections, we proposed a model for estimating satisfaction scores of users in a collaborative network given the ratings of users who have already read the document. Based on this model, we described the complete collaborative editing process. However, we have so far avoided specifying any particular method to select raters apart from randomization. Although we proved that choosing the best possible users to review a document is NP-Hard,  it seems likely that selecting raters based on the trust network (in polynomial time), may be more efficient than a uniform random choice. Here, we propose an algorithm that greedily selects raters who are likely to satisfy the maximum number of non-raters at each stage of the document development. In the following section, we show that this algorithm outperforms the other greedy algorithms based on trust, and also the random rater selection strategy.
 
Quite a few greedy algorithms present themselves as viable solutions once the social network structure ($G_R=((V,E),T,B,N_R,R)$) is obtained and the weight $\alpha$ is fixed. On the surface, it seems like a good idea to select the non-rater on whom other non-raters place the maximum amount of trust, to read and rate the article, i.e. we choose $i$ such that $i \notin N_R$ and $\sum_{j \notin N_R} t_{ji}$ is maximum over all $i$. However, it appears that this algorithm is short-sighted in the sense that it only takes into account the satisfaction of the immediate neighbors without considering the impact of the rater as the rating propagates across the network. We later show that this is indeed the case. Therefore, we aim for a greedy algorithm that selects the user who has the maximum impact on all users across the network.

Before presenting our algorithm, it is useful to dwell upon the concept of Marginal Cost, which we borrow from economics. \textit{Marginal cost} $(\text{MC})$ is a concept that is used to indicate the change in total cost that arises when the quantity produced changes by one unit. That is, it is the cost of producing one extra unit of a good. If the total cost function $\text{TC}$ is differentiable, this means that 
\[ \text{MC} = \frac{\,d (\text{TC})}{\,dQ}, \]
where $Q$ is the independent variable denoting the quantity being produced. Typically, $\text{TC}$ may be a linear or nonlinear function of $Q.$ It is not uncommon in real problems for the 
variable $Q$ to take only discrete values. In such a case the marginal cost is redefined in a macroscopic sense, that is, 
\[ \text{MC} = \frac{\Delta (\text{TC})}{\Delta Q}. \]

Let $U: G_R \rightarrow \mathbb{N}$ be a function that maps any particular state of a social network to the number of satisfied users in that network, i.e. $U(G) = |S|$, where $S$ is the set of users in the system whose satisfaction scores are greater than their thresholds. Then, we define a quantity \textit{Marginal Satisfaction} (MS) associated with each non-rater to be the number of non-raters newly satisfied when that particular user is chosen to rate the document. Mathematically,
\begin{equation}
MS_i = f(G_R,i) = U(G_R+\{i\}) - U(G_R).
\end{equation}
It is important to note the abuse of notation: all operations on $G_R$ are performed only on the set of raters i.e. $G_R + \{i\}$ denotes the same social network as $G_R$ except that user $i$ is now a rater. We assume that all raters provide the document a rating of $r$, or that we have the same a priori information about the scores that raters will provide, so that $r$ is the expected value of those scores. The value of $r$ can be increased at each successive stage of document editing to model the increasing quality of the document. Our greedy algorithm works as follows:

\begin{algorithm}
\caption{Greedy Algorithm based on Marginal Satisfaction to Select Raters}
\label{greedyalg}
\algsetup{indent=2em}
\begin{algorithmic}[1]
\WHILE{$\exists j, s_j\leq b_j \quad$}
\STATE  $\forall i \notin N_R$, \textbf{calculate} $f(G_R,i)$
\STATE \textbf{select} user $i^*$ such that $f(G_R,i^*) \geq f(G_R,i) \quad \forall i$
\STATE $N_R \rightarrow N_R \cup i^*$
\STATE \textbf{update} $s_j\  \forall j$ using Equation~(\ref{sat2d})
\ENDWHILE
\end{algorithmic}
\end{algorithm}
Basically at each stage, the algorithm chooses the non-rater with the maximum marginal satisfaction. We remark that this user may already be satisfied. The marginal satisfaction for each user can be 
approximated in polynomial time by running Algorithm~\ref{iteralgo} for a fixed number of iterations. Thus, the iterative algorithm proposed serves as an approximate oracle. The only disadvantage of this approach when compared to the random selection of raters or other greedy algorithms is the added complexity in calculating Marginal Satisfaction for all users at every stage.

However, in the case when $\alpha=0.5$, we can exploit the linearity of our model and the fact that the weights in~(\ref{eq:weights}) remain constant as we increase the number of raters, and use a dynamic programming based approach to calculate the marginal satisfaction of each user. For this, we define a quantity  $\Delta_{N_R}(i,j)$ to be the increase in the satisfaction of non-rater $j$, when $i$ (also a non-rater) becomes a rater and increases his satisfaction by $1$, assuming all other parameters remain fixed. Then, our algorithm is motivated by the following observation: if user $k$ becomes a rater, then all trust paths between $i$ and $j$ that pass through $k$ cannot be considered anymore while calculating $\Delta_{N_R}(i,j)$. Therefore we have, 
\[\Delta_{(N_R \cup k)}(i,j) = \Delta_{N_R}(i,j) - \Delta_{N_R}(i,k)\times\Delta_{N_R}(k,j).\] In other words, due to the linearity of satisfaction scores, the impact that user $i$ has on $j$ considering only paths that pass through $k$ equals the impact of $i$ on $k$ times the impact of $k$ on $j$. Using this quantity, we can calculate the vector containing the increase in satisfaction values of all non-raters when user $i$ becomes a rater as $(r-s_i)\times\Delta_{N_R}(i)$, where $r$ is the rating as defined previously, $s_i$ denotes the current satisfaction of $i$ and $\Delta_{N_R}(i)$ is a vector consisting of $(\Delta_{N_R}(i,j))_{j\notin N_R}$. Adding the above vector to the vector of satisfaction scores, one can easily calculate the marginal satisfaction of user $i$. Therefore, at each stage one has to only maintain $\Delta_{N_R}(i,j)$ for all $i,j$ belonging to the set of non-raters, in order to calculate the marginal satisfaction of all users when $\alpha=0.5$. This can be done in polynomial time and is much more efficient than running Algorithm~(\ref{iteralgo}) for all users at each stage. In the following section, we use this method to compare the performance of our greedy algorithm to other algorithms.

\section{Simulation and Results}\label{sec:simulations}
\label{simulations}
In Section~\ref{analytic}, we gave a fixed point equation to solve for the distribution of satisfaction scores and showed bounds on the number of users required to review an article in random graphs to satisfy a given proportion of users. In this section, we are interested in examining the performance of our proposed model and greedy algorithm in practice, namely collaborative editing systems with large number of users. We study the effect of several network parameters including the number of raters and the density of the underlying graph on user satisfaction. We first show that our satisfaction estimation model results in a considerable efficiency gain with respect to the case when all collaborators have to peruse the document, even when the raters are selected uniformly at random. We then show that our greedy algorithm outperforms this random selection of raters and other simple greedy algorithms, especially in the later stages of document development. The main metric used to quantify efficiency is human effort, which we assume to be proportional to the number of raters. In other words, the lesser the number of users who have reviewed the document, the more efficient the system. 

\subsection{Network Parameters}
The model that we simulate has many parameters, namely the number $N$ of users, the structure of the underlying collaborative network, the distribution of trust and the threshold values of users denoted by the vector $B$. We now describe the parameters chosen for our experiments.
\begin{itemize}
\item \textbf{Network} We consider directed random or Erd\"{o}s-R{\'e}nyi graphs with $N=10000$ users and edge probability $p$ (i.e. the probability that each of the $N^2$ edges exist). We express $p$ as $\sfrac{D}{N}$, where $D$ is the average number of outgoing links per user.
\item \textbf{Trust} Each edge $(i,j) \in E$ has a trust value $t_{ij} \in (0,1)$ associated with it. The trust values are chosen uniformly from the same interval.
\item \textbf{Threshold} All users have a uniform constant threshold value ($b$), except in the final simulation where the threshold values are chosen from a truncated normal distribution with effective mean $0.25$ and variance $0.144$ respectively.
\item \textbf{Raters} In our reference scenario, the raters are chosen randomly with a probability $k$. The expected number of raters is therefore, $kN$. In the final simulation, we compare the performance of an algorithm which selects raters randomly to one which selects the user with the highest incoming trust at each stage to the algorithm based on marginal satisfaction (our Algorithm~\ref{greedyalg}).
\item \textbf{Ratings} We assume all that raters give the article the same rating $r$, that without loss of generality equals $1$.
\item \textbf{Rater Priority} We take the rater weightage factor $\alpha$ to be $0.5$ in all our simulations. As we showed in Equation~(\ref{alphacomparison}), the performance of the system when $\alpha=0.5$ acts as a lower bound to its actual performance for larger values of the parameter.
\item \textbf{Document Content} A key assumption that we make is that the document content does not change throughout this process, i.e. each user only rates the article and does not modify it. This allows us to examine each stage of document development critically.
\end{itemize}

\subsection{Results}
All the experiments were performed in systems which follow the core model proposed in Equation~(\ref{sat2d}) and the Collaborative Editing scheme described in Algorithm~\ref{alg1}. In each case, the simulation was repeated a number times and mean values were plotted in order to obtain smooth curves and average out arbitrary variations.
\subsubsection*{Effect of the number of raters}
\begin{figure}[hbtp]
\begin{center}
\begin{tikzpicture}
\begin{footnotesize}
\begin{axis}[
xlabel=Fraction of raters ($k$),
ylabel=Fraction of users below threshold,
no markers,
domain=0:1,
height=5cm
]
\addplot file {sim/sim1_02.txt};
\addlegendentry{Threshold=0.2}
\addplot file {sim/sim1_03.txt};
\addlegendentry{Threshold=0.3}
\addplot file {sim/sim1_04.txt};
\addlegendentry{Threshold=0.4}
\end{axis}
\end{footnotesize}\end{tikzpicture}
\caption{Fraction of unsatisfied users with increasing number of raters for three different threshold values}
\label{sim1}
\end{center}
\end{figure}
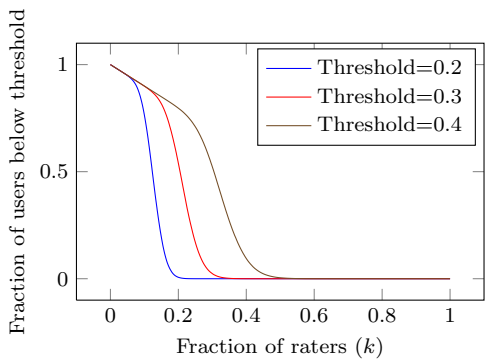
We first investigate the effect of the rater density (or alternatively the total number of raters) on the satisfaction values of non-raters in Figure~\ref{sim1}. We analyze this separately for three different threshold values $b=0.2, b=0.3$ and $b=0.4$ and for edge density $p=\sfrac{50}{10000}$, which results in a dense and well-connected network. As expected, increasing the number of raters decreases the total dissatisfaction in the network. Suppose that we denote by $k_{min}$, the minimum fraction of users required to rate the article so that only $1\%$ of the community remains dissatisfied, which indicates a near linear dependence of $k_{min}$ on the threshold $b$. In other words, increasing the threshold, 	 shifts the curve almost uniformly to the right. The fact that only $0.2$ fraction of the total users have to review an article in order to satisfy the majority of the community points to the efficiency gained due to our model, even when the graph structure is not exploited and raters are selected randomly. 

\subsubsection*{Effect of edge density}
\begin{tiny}
\begin{figure}[hbtp]
\begin{center}
\begin{tikzpicture}
\pgfplotsset{width=4cm,height=4cm}
\matrix {
\begin{axis}[ylabel=Fraction below threshold,
no markers,
title={$k=0.07$},
ymin=0,ymax=1,
]
\addplot file {sim/sim2_07.txt};
\end{axis}
&
\begin{axis}[no markers,title={$k=0.12$},ymin=0,ymax=1]
\addplot file {sim/sim2_12.txt};
\end{axis}
\\
\begin{axis}[ylabel=Fraction below threshold, xlabel=Edge probability ($p$),
no markers,
title={$k=0.15$},
ymin=0,ymax=1
]
\addplot file {sim/sim2_15.txt};
\end{axis}
&
\begin{axis}[xlabel=Edge probability ($p$),
no markers,
title={$k=0.2$},
ymin=0,ymax=1
]
\addplot file {sim/sim2_2.txt};
\end{axis}
\\
};
\end{tikzpicture}
\caption{Impact of Edge density on user satisfaction for different rater densities}
\end{center}
\end{figure}
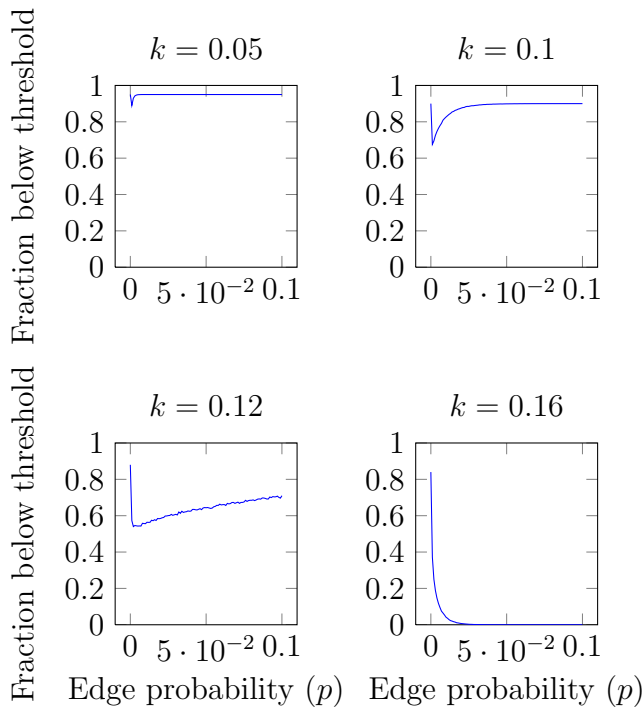
\end{tiny}
We vary the edge probability $p$ from $0$ to $0.1$ and plot the fraction of unsatisfied users for different values of rater densities. The value $p=0.1$ for $N=10000$ indicates that each user has, on an average $1000$ outgoing edges. Beyond this value, the network becomes highly dense and unrealistic. Hence we study the effect of edge probability only up to this value. The satisfaction threshold was uniformly chosen to be $b=0.2$ for all simulations. For lower values of rater density, the number of unsatisfied users in the network sharply decreases as the edge probability increases initially and then settles at a larger value upon further increase. This points to the fact that beyond a point, increasing the number of edges does not improve user satisfaction and has a contrary effect as long as the number of raters is sufficiently low. $k=0.1$ acts as a critical point after which the average satisfaction in the network gradually increases upon increasing the number of raters. Around the $k_{min}$ value for the network ($k=0.16$), increasing the edge probability after a  certain point has little or no effect on user satisfaction as most users remain satisfied. This phenomenon can be explained as follows: when the proportion of raters is low, those raters have a smaller impact in a dense network. Indeed, the nodes that trust a rater also trust several non-raters (and potentially, nodes with no trust path to any rater), which decreases their satisfaction. On the other hand, in a sparse network the raters are trusted only by a few users, but these users are not influenced much by several nodes, and thus are more easily convinced. When the proportion of raters is sufficiently large, the results are in the intuitive direction: a dense network increases the likeliness of trusting a rater, and thus of being satisfied.

\subsubsection*{Performance of greedy algorithms}
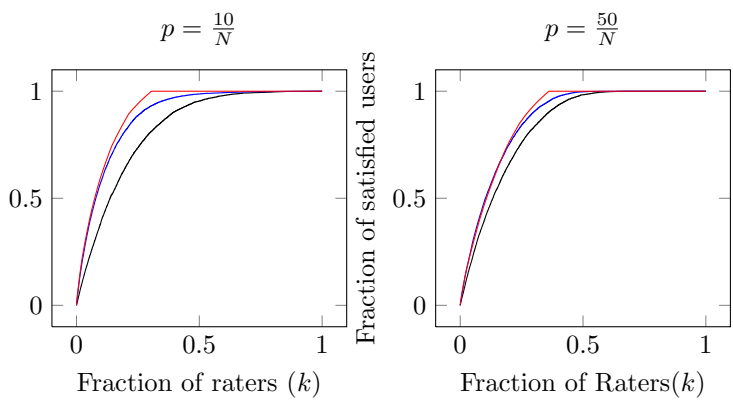
\begin{figure}[hbtp]
\begin{center}
\begin{tikzpicture}
\pgfplotsset{width=5.5cm,height=5cm}
\matrix {
\begin{axis}[xlabel=Fraction of raters ($k$),
no markers,
legend columns=1,
legend entries={Random Raters,Trust Based Greediness, Marginal Satisfaction},
legend to name=named,
title={$p=\frac{10}{N}$}
]
\addplot[color=black] file {sim/sim3_random_p10.txt};
\addplot[color=blue] file {sim/sim3_trust_p10.txt};
\addplot[color=red] file {sim/sim3_greedy_p10.txt};
\end{axis}
&
\begin{axis}[xlabel=Fraction of Raters($k$), ylabel=Fraction of satisfied users,no markers,title={$p=\frac{50}{N}$}]
\addplot[color=black] file {sim/sim3_random_p50.txt};
\addplot[color=blue] file {sim/sim3_trust_p50.txt};
\addplot[color=red] file {sim/sim3_greedy_p50.txt};
\end{axis}
\\
};
\end{tikzpicture}
\begin{center}
\tikzexternaldisable
{\scriptsize \ref{named}}
\tikzexternalenable
\end{center}
\caption{Comparison of three different algorithms to choose raters for two different values of edge probability}
\end{center}
\end{figure}
Finally, we compare the performance of the greedy algorithm based on marginal satisfaction proposed in Section~\ref{maxsatisfaction} to other algorithms for two values of edge probability, $p=\sfrac{10}{10000}$ and $p=\sfrac{50}{10000}$. The trust-based greedy algorithm chooses, at each stage, the user whose sum of incoming trust from other users is maximum, i.e. it selects a user whom other users trust the most (Select the non-rater $i$ such that $\displaystyle\sum\limits_{j \notin N_R}{t_{ji}}$ is maximum over all $i$). The random raters algorithm selects raters uniformly at random from the network. Although initially both the greedy algorithms perform similarly, the trust-based algorithm suffers from slow finish and takes almost twice as many raters to satisfy the entire network as opposed to the marginal satisfaction algorithm.
 This shows that initially the users on whom others have high trust are good choices for raters. However, the underlying dynamics are much more complex as the number of raters increases and the individual effect of each rater depends not only on his trustworthiness but also on the threshold of his neighbors. Our simulations show that our greedy algorithm is more effective for sparse graphs than dense graphs, as increasing the edge probability reduces the difference between the two greedy algorithms. It is also evident that the performance of the random algorithm is quite poor as opposed to those which select raters based on the graph structure and user trust.

\section{Conclusions and Future Work}\label{sec:conclusion}
In this paper, we initiated the study of collaborative document editing systems based on the `web of trust' concept and examined the performance of such systems in large random graphs. We believe that this opens up several avenues for future work given the similarity between collaborative systems and recommendation systems. 

Several of our results were based on the strong assumption that the document content does not change during review. One interesting direction of future research could be to study the dynamics of document editing and the conditions for convergence. This could be done using a model similar to the one proposed by Acemoglu et al~\cite{acemoglu2010spread}, who consider a network where each user holds a belief and users meet according to a Poisson process and exchange their beliefs. A slight modification of this model, with the addition of a common sink representing the document itself and users editing it according to the same Poisson process could be used to study the temporal effects of collaborative editing. Such a model could also be used to update a user's reputation depending on the fraction of the total document that is contributed by him, as proposed in~\cite{phanle2009evaluating}.

Closely linked to the above paradigm is the game-theoretic analysis of collaborative editing. The whole process can be viewed as a repeated game between several collaborators whose strategies correspond to content to add or modify in the document. Alternatively, one could also design incentives or mechanisms to induce editors to reveal their true evaluation of a document. For instance, users could be required to submit their evaluations (say $r_i$) for the document under consideration and their utility could be a function of ($|r_{avg} - r_i|$), where $r_{avg}$ is the average of the evaluations of all users. Such games would however require efficient modeling of user preferences and document content and a consideration of which of the equilibria are actually reachable in practice. 

We quantified trust between users as a value in the range $[0,1]$, thereby avoiding negative trust or distrust between users. Propagation of negative trust not only adds to the complexity of the model but also leads to certain inconsistencies as notions that are valid for trust may not be so realistic in the case of distrust. If user $i$ distrusts user $j$, who in turn distrusts $k$, a simple propagation model would indeed result in user $i$ trusting $k$. However, the notion `the enemy of my enemy is my friend' may not always be valid in real life. Guha et al~\cite{guha2004propagation} try tackling some of these issues via two separate matrices capturing trust and distrust respectively. One possible direction of future research could involve checking whether our results are valid when negative trust is introduced. We have also deliberately avoided the effect of malicious users in our system and focused more on the dynamics of recognizing vandals. It would be interesting to study convergence in a system where a small fraction of users have malicious intent.

We showed the efficiency of our system in Erd\"{o}s-R{\'e}nyi graphs with conservative values of the edge probability parameter $p$. Whether this efficiency is improved or worsened in more realistic models like Small World~\cite{watts1998collective} or the scientific co-authorship graph~\cite{newman2001structure} warrants further experiments. 

\section*{Acknowledgement}

This work has been partially funded by the French research agency, through the FLUOR project (\url{http://fluor.no-ip.fr/index.php?lang=en}).

\bibliographystyle{ieeetr}
\bibliography{bibliography}

\begin{appendices}
\section{Trust Updation Model}
In the previous sections, we defined the idea of trust for collaborative-working contexts and proposed a satisfaction estimation model based on this. The model and associated algorithms however, assume the existence of trust between various pairs of users and do not dwell upon how to determine the trust between users. We now propose a generic trust updation mechanism that calculates the new trust value, given the old trust between a pair of users, and their current ratings. Our model is motivated by the following observations:
\begin{enumerate}
\item Content in most collaborative systems cannot be purely objective as it is provided by human editors. Therefore, the trust between editors is also subjective and depends on their individual opinions. A user may greatly mistrust another user whose opinions do not concur with his own. We conclude that the users with similar outlook and content should have high mutual trust and vice-versa. This is in congruence with the definition of trust that we used in Section~\ref{trustdefinition} that trust is the amount of faith a user has in the actions of others.

\item We mentioned earlier that the social network structure in the system may be explicitly provided or derived implicitly. Indeed, the underlying network of most collaborative systems cannot be directly expressed in terms of friends, contacts and followers and has to be calculated using known information. We aim to preserve this property of web-based collaborative systems and estimate trust between users by using their previous interactions with respect to a common article of reference. When no such standard exists, we assume that the trust between the users is zero.

\item Updating trust based on a user's overall contribution to the published document while objective, is more suited for reputation systems. For instance, in~\cite{phanle2009evaluating}, the overall trust value of a user (say $i$) is calculated as the fraction of the total number of words in the article, written by that particular user. Such systems depend somewhat on semantics, and hence can be \textit{cheated} to a certain extent. In the above model, a user may deliberately rewrite simple sentences in a complex manner in order to increase his trust/reputation.
\end{enumerate}

We use the following key idea for updating trust: a user $i$'s trust $t_{ij}$ in another user $j$ should increase if the rating of user $i$ is similar to that of user $j$ at a particular stage of document development. The farther the rating of user $i$ is from $j$'s rating, the more $j$ will reduce its trust in $i$. Stating this mathematically, the trust value should be inversely dependent on $|r_i - r_j|$. Hence, the general updation formula should be as follows:
\begin{equation}
t_{ij}' = \gamma t_{ij} + (1- \gamma) f(|r_i - r_j|),
\label{trustgeneric}
\end{equation}
where $t_{ji}'$ is the new trust value of user $i$ in user $j$ and $t_{ij}$ the old value, $\gamma$ is a parameter which determines the rate of change of trust and $f$ is a monotonically decreasing function. Using this model, the updation mechanism could be described as follows:
\begin{quote}
Everytime a user ($i$) reads, and rates the article, update the trust values ($t_{ij}$ and $t_{ji}$) for all $j \in N_R$ using Equation~(\ref{trustgeneric}).
\end{quote}
This mechanism results in \textit{fairness} as when a user with few or no edges rates the article, trust values between this user and other existing raters are automatically calculated without any direct interaction. Also, by \textbf{Independence} (between rater and rater) property, this trust updation model does not violate any of the other properties elucidated in Section~\ref{model}. Now before choosing a specific function for updating trust, we list the requirements that any model must follow:
\begin{enumerate}
\item The function must be continuous and monotonically decreasing. 
 
\item $0 \leq |r_i - r_j| \leq 1$. Thus our function $f$ must satisfy the boundary conditions, $f(0) = 1$ and $f(1)=0$ as $t_{ij}$ also lies in the same range.
\end{enumerate}

\subsection*{Formula}
The simplest and most obvious function which satisfies our constraints is the linear function $f(x) = 1-x$. Here we do not bother ourselves with the function behavior outside the domain $[0,1]$.  The following graph displays the function in the required domain. The overall updation formula becomes
\[
t_{ij}' = \gamma t_{ij} + (1- \gamma) (1 - x),
\]
where $x=|r_i - r_j|$. 
The problem with this function is that slope of the linear function is too gradual, and it gives relatively high values even when there is a sufficient difference in the ratings. For example, when $x=0.5$, $f(x) =0.5$ and if the previous trust between the users was less than this value, then this increases the trust between them. This is however, not very practical as such a large difference in rating indicates a significant difference of opinion. We need a function that gives a high trust value when $x$ is close to $0$ but decreases rapidly after that. It is evident from this that a convex function would be appropriate for our needs.

A general convex function that (asymptotically) fits our needs is 
\begin{equation}
f(x) = \frac{1}{1 + px}
\end{equation} 
\begin{figure}[hbtp]
\begin{center}
\begin{tikzpicture}
\begin{axis}[
xlabel=$x$,
ylabel=$\frac{1}{1+px}$,
grid=major,
no markers,
domain=0:1
]
\addplot {1/(1+3*x)};
\addplot {1/(1+7*x)};
\addplot {1/(1+16*x)};
\legend{$p=3$,$p=7$,$p=16$}
\end{axis}
\end{tikzpicture}
\caption{Function $\frac{1}{1+px}$ plotted for three different values of $p$}
\label{convextrust}
\end{center}
\end{figure}
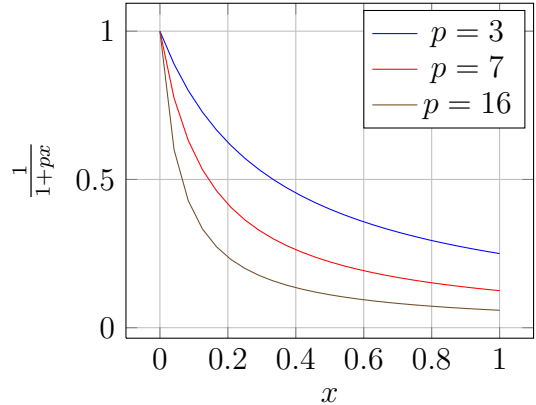
The larger the value of $p$, the more convex the function is. Moreover, as $ p \rightarrow \infty$, $f(1) \rightarrow 0$. The function has been plotted for different values of $p$ in Figure~\ref{convextrust}. The value of $p$ can be chosen based on parameters like the total size of the community, the definition of trust as is appropriate for the task at hand and the value of $\gamma$ that is to be considered.

The advantage of using such a model is that one can build the whole trust matrix based solely on the revision history of the article. Such a model, when implemented centrally also provides some incentive against vandalism and manipulation. Firstly, users who deliberately provide the article an incorrect rating to further their own motives incur the wrath of low trust from other users. Moreover, any algorithm that selects raters based on trust such as the one described in Section~\ref{maxsatisfaction} would never select such a user due to his poor impact on other collaborators. Secondly, it is possible for a user to manipulate the system by providing a rating similar to that of other users in order to increase his \textit{trustworthiness}. This is avoided by keeping ratings private thereby encouraging good behavior. 

\end{appendices}

\end{document}